\documentclass[a4paper,fleqn]{cas-dc}

\usepackage[numbers]{natbib}
\usepackage{amsmath,amssymb,amsthm,mathtools}
\usepackage[capitalize]{cleveref}
\usepackage{tikz}
\usetikzlibrary{arrows,shapes,positioning,calc}
\usepackage{paralist}

\theoremstyle{plain}
\newtheorem{theorem}{Theorem}
\newtheorem{assumption}[theorem]{Assumption}
\newtheorem{corollary}[theorem]{Corollary}
\newtheorem{lemma}[theorem]{Lemma}
\theoremstyle{definition}
\newtheorem{definition}[theorem]{Definition}
\newtheorem{problem}[theorem]{Problem}
\newtheorem{example}[theorem]{Example}

\newcommand{\N}{\mathbb{N}}
\newcommand{\R}{\mathbb{R}}

\newcommand{\policy}{\pi}
\newcommand{\traj}{\xi}

\newcommand{\AP}{\ensuremath{\mathtt{AP}}}
\newcommand{\cT}{\mathcal{T}}
\newcommand{\ldist}{\Upsilon}
\newcommand{\Ldist}{\ensuremath{\mathfrak{D}}}
\newcommand{\trace}{\mathtt{t}}
\newcommand{\labelfunc}{L}

\newcommand{\spec}{\phi}

\newcommand{\until}{\,\mathtt{U}\,}
\newcommand{\Next}{\xspace\bigcirc\xspace}

\newcommand{\RWA}{\ensuremath{\Theta}}
\newcommand{\propC}{\ensuremath{\mathcal{C}}}
\newcommand{\contextRWA}{\ensuremath{K}}
\newcommand{\reachRWA}{\ensuremath{\mathcal{R}}}
\newcommand{\avoidRWA}{\ensuremath{\mathcal{A}}}
\newcommand{\globally}{\xspace\square\xspace}
\newcommand{\finally}{\xspace\Diamond\xspace}
\newcommand{\Mc}{\mathcal{M}}
\newcommand{\Dc}{\mathcal{D}}
\newcommand{\Wc}{\mathcal{W}}
\newcommand{\Tc}{\mathcal{T}}
\tikzset{bplayer0/.style = {draw, thick, shape=ellipse, text width=0.2cm, align=center}}
\tikzset{bplayer1/.style = {draw, thick, shape=rectangle, text width=0.3cm, align=center}}
\newcommand{\bhpos}{1.8}
\newcommand{\bypos}{0.85}
 
\usepackage{hyperref}
\begin{document}
\let\WriteBookmarks\relax
\def\floatpagepagefraction{1}
\def\textpagefraction{.001}
\shorttitle{Context-Triggered Robust MPC for Temporal Logic Specifications}
\shortauthors{Bahari Kordabad et al.}

\title[mode=title]{Context-Triggered Robust MPC for Temporal Logic Specifications}
\author[1]{Arash Bahari Kordabad}[orcid=0000-0001-8931-5372]
\cormark[1]
\cortext[1]{Corresponding author}
\ead{arashbk@mpi-sws.org}

\author[2]{Satya Prakash Nayak}[orcid=0000-0002-4407-8681]
\ead{satya.nayak@ist.ac.at}

\author[1,3]{Sadegh Soudjani}[orcid=0000-0003-1922-6678]
\ead{sadegh@mpi-sws.org}

\author[1]{Anne-Kathrin Schmuck}[orcid=0000-0003-2801-639X]
\ead{akschmuck@mpi-sws.org}

\affiliation[1]{organization={Max Planck Institute for Software Systems (MPI-SWS)}, 
city={Kaiserslautern, 67663},
country={Germany}}

\affiliation[2]{organization={Institute of Science and Technology Austria (ISTA)},
city={Klosterneuburg, 3400},
country={Austria}}

\affiliation[3]{organization={University of Birmingham},
city={Birmingham, B15 2TT},
country={United Kingdom}}
\begin{abstract}
We consider the problem of synthesizing robust feedback controllers for discrete-time linear systems that ensure the satisfaction of context-dependent linear temporal logic specifications in the presence of additive bounded disturbances. Building on existing results that reduce context-triggered temporal logic synthesis to the realization of context-dependent reach-avoid-stay (cRAS) objectives, we focus on the corresponding low-level control synthesis problem. We first employ certificate-based conditions for the almost-sure satisfaction of RAS specifications. Based on these conditions, we propose a switching control architecture that combines robust model predictive control (MPC) with a local invariant controller, and show that the resulting MPC value function serves as a reachability certificate while avoidance is enforced through robust constraints and the stay is enforced via the local controller. To obtain computationally tractable formulations for the resulting robust optimizations, we employ convex duality to reformulate the robust constraints into equivalent deterministic optimization problems, yielding convex quadratic and second-order cone programs for relevant geometric settings. The proposed framework is demonstrated on a robot navigation problem with context-triggered logical switches in both static and moving environments. The results show significantly larger feasible sets than Lyapunov-based approaches, while naturally accommodating dynamic environments and online task reconfiguration.
\end{abstract}

\begin{keywords}
robust control \sep linear temporal logic \sep context-triggered control \sep reach-avoid-stay specifications \sep robust model predictive control \sep convex duality 
\end{keywords}
\begin{NoHyper}
\maketitle
\end{NoHyper}

\section{Introduction}\label{sec:intro}

Robust controller synthesis for uncertain systems with continuous state and control spaces under temporal logic specifications is a challenging problem at the intersection of control theory and formal methods~\cite{belta2017formal,lindemann2025formal}. The difficulty stems from the need to simultaneously handle continuous dynamics, state and input constraints, uncertainty, and high-level logical requirements that describe complex tasks beyond classical objectives such as stabilization, tracking, or regulation~\cite{kordabad2025data,kordabad2024control}. The problem becomes even more challenging when the specification depends on the \emph{context}, and the context itself may change over time due to external events or because of the evolution of the controlled system~\cite{nayak2023context,nejati2024context}. In such settings, the controller must not only satisfy a temporal logic objective, but also react online to context-triggered changes in the active task.

To illustrate this setting, we consider a robot navigating in two rooms connected by a sliding door, illustrated in Figure~\ref{fig:env}. The robot is required to move among target regions according to a mode requested by the environment. At the same time, it must avoid walls and, depending on the current configuration, also avoid the door region when the door is closed. The logical context changes over time and affects the active objective of the robot. Moreover, the state of the door, namely whether it is open or closed, changes when the robot enters a particular region. Hence, the motion of the robot and the logical evolution of the task are tightly interconnected. When stochastic disturbances are present in the dynamics, the controller must react to such logical switches while maintaining safe motion and ensuring progress toward the requested target.

\begin{figure}
\centering
\begin{tikzpicture}[scale=1.2,font=\scriptsize]
	\draw[thick] (0,0) rectangle (3,3);
	\fill[gray!40] (1.35,0) rectangle (1.65,1);
	\fill[gray!40] (1.35,2) rectangle (1.65,3);
	\node[black] at (1.5,0.5) {$\Wc$};
	\draw[densely dashed,thick,fill=blue!7] (1.35,1) rectangle (1.65,2);
	\node at (1.5,1.5) {$\Dc$};
	\filldraw[draw=green!55!black,fill=green!55!black!12,thick] (0.75,1) circle (0.3);
	\node at (0.75,1) {$\Tc_1$};
	\filldraw[draw=red!80!black,fill=red!80!black!10,thick] (0.75,2) circle (0.3);
	\node at (0.75,2) {$\Tc_2$};
	\filldraw[draw=blue!70!black,fill=blue!70!black!10,thick] (2.25,1.5) circle (0.3);
	\node at (2.25,1.5) {$\Tc_3$};
	\filldraw[draw=black,fill=blue!55,rounded corners=0.5pt] (0.92,0.42) circle (0.08);
	\node[anchor=west] at (0.61,0.2) {robot};
	\node[gray!45!black] at (0.55,2.65) {Room~1};
	\node[gray!45!black] at (2.45,2.65) {Room~2};
\end{tikzpicture}
\caption{Schematic of the running example: a robot navigates between two rooms connected by a sliding door $\Dc$ in the wall $\Wc$. The environment requests a mode $\Mc_i$, upon which the robot must reach the corresponding target $\Tc_i$, $i\in\{1,2,3\}$, while avoiding the wall (and the door region when the door is closed). The door opens when the robot enters $\Tc_1$ or $\Tc_3$ and closes when it enters $\Tc_2$.}\label{fig:env}
\end{figure}

A convenient way to address this interaction between logic and dynamics is to separate the synthesis problem into a high-level and a low-level layer. Recent work in~\cite{nayak2023context} has shown that context-triggered temporal logic synthesis can be reduced to strategic reasoning over a game graph, where the high-level controller reacts to context switches using strategy templates~\cite{strategytemplates,PermissiveAssumptions}. 
The strategic choices of the logical controller are then translated into context-triggered reach-avoid-stay (cRAS) objectives for the continuous system: rather than following a fixed plan, the active objective is updated reactively whenever the logical context changes. In the present work, we adopt this viewpoint and focus on the low-level synthesis problem for stochastic systems.

The low-level objectives considered in this paper take the form of RAS tasks: the state should avoid a forbidden set at all times, eventually reach a target set, and remain there thereafter. For stochastic systems with bounded disturbances, such objectives can be characterized through suitable real-valued functions on the state space, which we refer to as \emph{certificates}. More precisely, one may seek an avoidance certificate that separates the unsafe region~\cite{ames2019control} from the state space, a reachability certificate that guarantees the state progress toward the target~\cite{majumdar2024necessary,kordabad2025certificates}, and an invariance certificate that ensures robust stay inside the target set. If these certificates satisfy suitable conditions, then the RAS objective is satisfied with probability one~\cite{majumdar2024necessary,kordabad2026almost}. This observation provides the formal foundation on which our control synthesis method is built.

To synthesize such certificates constructively, we employ robust Model Predictive Control (MPC) alongside a local invariant controller. MPC is an optimization-based control methodology that is particularly well suited to systems with continuous state and input spaces and explicit constraints~\cite{MPCbook}. In the robust setting, it can account for bounded uncertainty by propagating reachable tubes over the prediction horizon and enforcing constraints for all disturbance realizations within those tubes~\cite{langson2004robust}. This makes it a natural candidate for satisfying the avoid and reach parts of an RAS specification. In addition, because the optimization is solved online in a receding-horizon manner, the controller can naturally adapt to changes in the active context and thus react to changes in the environment or in the task specification~\cite{batkovic2020robust}.

\smallskip
\noindent\textbf{Related Work.}
Many existing approaches to temporal-logic-based control synthesis for continuous and hybrid systems either (I) rely heavily on state-space discretization, or (II) neglect context-triggered logical switching. Within (I), abstraction-based methods \cite{tabuada2009_book} construct finite-state models of the continuous dynamics and then solve the resulting discrete synthesis problem using tools from formal methods (e.g.~\cite{Scots,ARCS,Mascot,pFaces,li2018rocs,MajumdarMRSS_CAV23}).
While being able to handle arbitrary reactive (i.e., context-triggered) logical specifications, such approaches typically suffer from conservatism and scalability limitations induced by discretization.
Within (II), optimal control methods are combined with temporal logic constraints \cite{belta2019formal,firouzmand2021robust,nikou2021robust,sadraddini2015robust}, including robust tube-based formulations for constrained dynamical systems and high-level motion planning tasks.
Yet, due to the pre-computation of constraints from logical specifications, context-triggered switching is typically not supported.

Context-triggered abstraction-based control~\cite{nayak2023context}
bridges (I) and (II) by providing a novel abstraction technique that enables strategic reactivity without state-space discretization. This framework has so far been applied to deterministic~\cite{nayak2023context} and stochastic dynamical systems~\cite{nejati2024context,NejatiS_CDC24}, via pre-computed barrier certificates. This paper extends these approaches to robust online optimization, in particular MPC. This enables us to fully exploit the benefits of handling robust uncertainty and complex context-triggered constraints without discretization of the spaces. 
%


\smallskip
\noindent\textbf{Contributions.}
The present paper extends these existing approaches by developing a robust online optimization-based control framework compatible with context-triggered specifications. Our approach does not rely on state-space discretizations, explicitly accounts for bounded disturbances over stochastic linear systems, and provides certificate-based guarantees. 
Our main contributions are as follows:\\
\begin{inparaitem}[$\triangleright$]
    \item We provide certificate-based sufficient conditions for the satisfaction of cRAS objectives under stochastic disturbances.\\
    \item We propose a robust MPC construction for the reach and avoid parts of the specification, and show that the corresponding value function serves as a reachability certificate.\\
    \item We design a switching strategy with a local controller that guarantees the stay part of the specification through robust forward invariance of the target set.\\
    \item We derive tractable deterministic reformulations of the robust optimization problems using convex duality.\\
    \item We demonstrate the proposed approach on a robot navigation problem with context-triggered switching in both static and moving environments, and show that it enlarges the feasible domain compared with the Lyapunov-based approach in~\cite{nayak2023context}.
\end{inparaitem}

The main practical advantage of the resulting framework compared to \cite{nayak2023context, nejati2024context,NejatiS_CDC24} is its receding-horizon structure, which makes the controller naturally responsive not only to context switches but also to changes in the environment, as illustrated in the moving-target scenario considered in this paper. In addition, the tractable robust optimization formulation directly incorporates disturbance effects into the synthesis problem, yielding a less conservative low-level controller. This improvement is reflected in the numerical results, where the proposed method produces a substantially larger feasible domain than the Lyapunov-based approach in~\cite{nayak2023context}, while remaining compatible with the same high-level context-triggered logical framework.

\smallskip
\noindent\textbf{Outline.}
The paper is organized as follows. Section~\ref{sec:pre} provides preliminaries on the considered system and logical specification and delivers the problem definition. Section~\ref{sec:reduction} explains how temporal logic specifications with context switches are reduced to cRAS objectives, following the approach of~\cite{nayak2023context}.  The certificate-based characterization of these objectives and the robust model predictive control construction are developed in Section~\ref{sec:C_RMPC}.  In Section~\ref{sec:tract}, a tractable reformulation of the resulting robust optimization problems is presented via convex duality. Section~\ref{sec:Sim} provides  numerical simulations on static and moving scenarios. Finally, the paper concludes with a summary of the main results in Section~\ref{sec:Conc}.

\section{Preliminaries}\label{sec:pre}
We denote the set of non-negative integers by $\N$ and the set of real numbers by $\R$.  We consider a discrete-time stochastic linear system with bounded state space $X \subset \mathbb R^{n_x}$, input space $U \subseteq \mathbb R^{n_u}$, and bounded disturbance set $W \subset \mathbb R^{n_w}$ equipped with the Borel $\sigma$-algebra $\mathcal B(W)$. Let $(\Omega,\mathscr F,\mathbb P_{\Omega})$ be a probability space, where $\Omega$ is the sample space, $\mathscr F$ is a $\sigma$-algebra on $\Omega$, and $\mathbb P_{\Omega}:\mathscr F\rightarrow[0,1]$ is a probability measure. The disturbance process is modeled as an i.i.d.\ sequence $\{w_k\}_{k\in\mathbb N}$ of random vectors
$
w_k:(\Omega,\mathscr F)\rightarrow (W,\mathcal B(W)).
$
We denote by $\mathbb P_w$ the common law of $w_k$ on $(W,\mathcal B(W))$, namely
$\mathbb P_w(E):=\mathbb P_{\Omega}(w_k^{-1}(E))$ for all $E\in\mathcal B(W)$. Thus, each $w_k$ takes values in $W$ and is sampled according to $\mathbb P_w$. For each $\omega\in\Omega$, the sequence $\{w_k(\omega)\}_{k\in\mathbb N}$ represents a realization of the disturbance process.  The stochastic dynamics are given by
\begin{equation}\label{eq:dyn}
x_{k+1} = A x_k + B u_k + C w_k,
\end{equation}
where $x_k \in X$ is the state, $u_k \in U$ is the control input, and $w_k \in W$ is the disturbance at time step $k$. The matrices $A\in \mathbb{R}^{n_x\times n_x}$, $B\in \mathbb{R}^{n_x\times n_u}$ and $C\in \mathbb{R}^{n_x\times n_w}$ are assumed to be given.

The corresponding nominal (disturbance-free) dynamics are
\begin{equation}\label{eq:dyn_nom}
z_{k+1} = A z_k + B u_k, \quad z_0 = x_0.
\end{equation}

A \emph{control policy} for the stochastic system~\eqref{eq:dyn} is a measurable map $\policy:X\rightarrow U$ that assigns an input to each state. Given a policy $\policy$ and an initial condition $x_0\in X$, a \emph{trajectory} of~\eqref{eq:dyn} under $\policy$ is a sequence $\traj_\pi(\omega) = \{x_0,x_1,\ldots\}$ satisfying $x_{k+1} = Ax_k+B \policy(x_k)+ Cw_k$ for all $k\in\mathbb N$.  

\subsection{Temporal Logic Specifications}\label{subsec:LTL}
We now introduce the logical framework used to specify the desired system behavior. In particular, we employ linear temporal logic (LTL), which we introduce briefly through the running example of Figure~\ref{fig:env}; we refer to~\cite[Chapter~5]{BaierKatoen} for the detailed syntax and semantics.

\smallskip
\noindent\textbf{Atomic propositions.}
The relevant Boolean properties of the system and its environment are captured by \emph{atomic propositions}, grouped into three finite sets. \emph{State propositions} $\AP_S$ describe which region of the state space the robot occupies; in the example, the targets $\Tc_1,\Tc_2,\Tc_3$ and the walls $\Wc$. We identify each state proposition $\cT\in\AP_S$ with the corresponding region $\cT\subseteq X$, so that $\cT$ holds at a state $x$ if and only if $x\in\cT$. \emph{Observation propositions} $\AP_O$ describe information provided by the uncontrolled environment; here, the requested mode $\Mc_1,\Mc_2,\Mc_3$ and the door status $\Dc$ (open or closed). \emph{Control propositions} $\AP_C$ name the finite set of low-level feedback policies that the high-level controller can activate, and are introduced in Section~\ref{sec:reduction}. We write $\AP:=\AP_S\cup\AP_O\cup\AP_C$.

\smallskip
\noindent\textbf{Labels, disturbances, and traces.}
The state propositions induce a \emph{labelling} $\labelfunc\colon X\to 2^{\AP_S}$ with $\cT\in\labelfunc(x)\Leftrightarrow x\in\cT$, which extends to a trajectory $\traj=\{x_0,x_1,\ldots\}$ by $\labelfunc(\traj)=\{\labelfunc(x_0),\labelfunc(x_1),\ldots\}$. The environment is modelled by a \emph{logical disturbance function} $\ldist\colon\N\to 2^{\AP_O}$ that prescribes how the requested mode and door status evolve over time, and we collect all such functions in $\Ldist$. A trajectory $\traj$ together with a disturbance $\ldist$ then produces a \emph{trace} $\trace_{\labelfunc,\ldist}(\traj)=l_0l_1\ldots$ with $l_k=\labelfunc(x_k)\cup\ldist(k)\in 2^{\AP_S\cup\AP_O}$, i.e., the sequence of propositions that hold along the run.

\smallskip
\noindent\textbf{Specifications.}
An LTL specification $\spec$ constrains such traces using the Boolean connectives $\neg,\wedge,\Rightarrow$ and the temporal operators $\Next$ (``next''), $\finally$ (``eventually''), $\globally$ (``always''), and $\until$ (``until''). 
We write $\trace\vDash\spec$ when a trace $\trace$ satisfies $\spec$. For instance, the requirement that the robot always avoids the walls and, whenever a mode $\Mc_i$ is requested persistently, eventually reaches the corresponding target $\Tc_i$, is written as
\begin{equation}\label{eq:examplespec}
	\spec \;=\; \globally\neg\Wc \;\wedge \bigwedge_{i\in\{1,2,3\}} \left(\finally\globally\Mc_i \Rightarrow \finally\globally\Tc_i\right).
\end{equation}

\subsection{Problem Formulation}
Since the controller must react online to the logical context provided by the environment, we consider a \emph{hybrid state-feedback policy}, i.e., a function $\policy_{\mathrm{h}}:\N\times X\times 2^{\AP_O}\to U$ that selects a control input from the current time step $k$, the current state $x_k$, and the currently active observation propositions $\ldist(k)\in 2^{\AP_O}$. 

Under such a hybrid policy $\policy_{\mathrm{h}}$ and a logical disturbance function $\ldist\in\Ldist$, a trajectory of~\eqref{eq:dyn} from $x_0\in X$ is a sequence $\traj_{\pi_{\mathrm{h}}}(\omega)=\{x_0,x_1,\ldots\}$ with $x_{k+1}=Ax_k+B\policy_{\mathrm{h}}(k,x_k,\ldist(k))+Cw_k$, which generates the trace $\trace_{\labelfunc,\ldist}(\traj_{\pi_{\mathrm{h}}}(\omega))$ over $\AP_S\cup\AP_O$.

\begin{problem}\label{prob:MainProb}
Given the stochastic system~\eqref{eq:dyn} with an initial state $x_0\in X$, a labelling function $\labelfunc$, and an LTL specification $\spec$ over $\AP_S\cup\AP_O$, synthesize a hybrid state-feedback policy $\policy_{\mathrm{h}}$ such that, for all logical disturbance functions $\ldist\in\Ldist$,
\begin{equation*}
     \mathbb P_{\Omega}\bigl(\{\omega\in\Omega \mid \trace_{\labelfunc,\ldist}(\traj_{\pi_{\mathrm{h}}}(\omega))\vDash \spec\}\bigr)=1.
\end{equation*}
\end{problem}
That is, Problem~\ref{prob:MainProb} asks for a hybrid policy that drives the stochastic system~\eqref{eq:dyn} to almost surely satisfy the specification $\spec$.

\section{LTL Specifications to RAS Specifications}\label{sec:reduction}
In this section, we summarize how Problem~\ref{prob:MainProb} can be reduced to synthesizing controllers for a set of \emph{context-triggered reach-avoid-stay (cRAS) tasks}. We follow the context-triggered hybrid control framework of~\cite{nayak2023context} and recall only the parts needed for the remainder of the paper, referring the reader to~\cite{nayak2023context} for the formal construction and its correctness proofs. The single object that the low-level synthesis in the subsequent sections operates on is the RAS objective formalized in this section.

\subsection{High-level Logical Synthesis}
The reduction rests on the classical correspondence between temporal logic and two-player games (see, e.g.,~\cite{LTLgamesBook}): every LTL formula $\spec$ over $\AP_S\cup\AP_O$ can be translated into an equivalent two-player game between the \emph{controller} and the \emph{environment} played on a finite graph. Each vertex carries a label, namely a set of propositions describing the current \emph{logical context} (e.g., the requested mode and the door status), and the two players move along the edges of the graph, with the controller resolving the state propositions $\AP_S$ and the environment resolving the observation propositions $\AP_O$. Solving this game with classical methods from reactive synthesis \cite{finkbeiner2016synthesis} results in a (winning) controller strategy which ensures that $\spec$ is satisfied no matter how the environment chooses its moves, i.e., changes the context.

However, rather than committing to such a single winning strategy, the framework of~\cite{nayak2023context} computes a \emph{strategy template}~\cite{strategytemplates,AnandNS24} instead: a compact collection of local edge constraints that characterizes a whole family of winning strategies. Intuitively, at each vertex, such a template marks some outgoing edges as eventually \emph{required} and others as \emph{prohibited}, while leaving the controller free to choose among the remaining ones. This freedom is precisely what allows the high-level logical decisions to be elegantly delegated to the continuous control layer, as described next.

\subsection{Reach-Avoid-Stay Specifications}

Strategy templates optimistically assume that state propositions in $\AP_S$ can be toggled instantaneously. In reality, however, a state proposition becomes true only once the continuous system~\eqref{eq:dyn} actually drives its state into the corresponding region while the active context persists. Consequently, the edge constraints that a strategy template imposes at a given vertex translate, per context, into a \emph{RAS task} for the continuous system: the vertex label fixes the \emph{context}, the regions associated with the required successors form the set to be \emph{reached}, and the regions associated with the prohibited successors form the set to be \emph{avoided}. 
This translation is illustrated in the following example and formalized afterwards.

\begin{figure}
\centering
\begin{tikzpicture}
	\node[bplayer1,label={[align=center]below:$\{\Tc_2\}$}] (b) at (-1.5*\bhpos,0) {$b$};
	\node[bplayer0,label={[align=center]left:$\{\Mc_1\}$}] (c) at (0, \bypos) {$c$};
	\node[bplayer0,label={[align=center]below:$\{\Mc_1,\Dc\}$}] (d) at (0, -\bypos) {$d$};
	\node[bplayer1,label={[align=center]below:$\{\Tc_1\}$}] (e) at (1.5*\bhpos, 0) {$e$};
	\node[bplayer1,label={[align=center]right:$\{\Wc\}$}] (f) at (0, 0) {$a$};

	\path[->,thick] (b) edge[bend left =-10] (d);
	\path[->,thick] (c) edge[dashed,blue!80] (b) edge[bend left =10] (e) edge[dotted,red, line width = 0.05cm] (f);
	\path[->,thick] (d) edge (e) edge[dotted,red, line width = 0.05cm] (f) edge[dashed,blue!80,bend left =-10] (b);
	\path[->,thick] (e) edge[bend left =10] (c);
\end{tikzpicture}
\caption{A part of the game for the robot navigation example. The controller chooses at circled vertices and the environment chooses at squared vertices; each vertex label gives the logical context active at that vertex. The strategy template prohibits the red dotted edges and eventually prohibits the blue dashed edges, while the controller may freely use the remaining edges.}\label{fig:gamegraph}
\end{figure}

\begin{example}\label{ex:gamegraph}
Consider the robot navigating two rooms connected by a sliding door from~\cite{nayak2023context}, which is the running example of this paper (as explained in Section~\ref{sec:intro}). The robot must reach the target $\Tc_i$ of the mode $\Mc_i$ currently requested by the environment while avoiding the walls $\Wc$. 
A part of the corresponding game is shown in Figure~\ref{fig:gamegraph}. At the controller vertex $d$, whose label $\{\Mc_1,\Dc\}$ encodes the context ``mode $\Mc_1$ requested and door closed'', the strategy template prohibits the edge to $a$ (where the wall $\Wc$ would be hit) and eventually prohibits the edge to $b$ (which would let the robot cycle without ever reaching the requested target), so that the controller eventually commits to the edge to $e$ (labeled $\{\Tc_1\}$). This induces the context-dependent RAS task: \emph{whenever the context $\{\Mc_1,\Dc\}$ persists, reach the target $\Tc_1$ while avoiding the wall $\Wc$ and the target $\Tc_2$}. Such a task is formalized next.\qed
\end{example}

\begin{definition}\label{def:rwa}
    A \emph{context-dependent reach-avoid-stay objective} (cRAS) is defined as a triple $\RWA := (\contextRWA,\reachRWA,\avoidRWA)$  where $\contextRWA\subseteq \AP_O$ is the \emph{context}, $\reachRWA\in 2^{\AP_S}$ is the target set and $\avoidRWA\in 2^{\AP_S}$ is the avoid set. 
    A control proposition $\propC\in\AP_C$ is said to \emph{implement} the cRAS $\RWA$ if the following specification given by $\spec_\propC$ holds with probability one under the closed-loop system induced by $\propC$:
	\begin{equation}
		\spec_\propC:=\globally\Big (\globally (\propC ~\wedge~\contextRWA) \Rightarrow \finally\globally \reachRWA ~ \wedge ~  \globally \neg\avoidRWA\Big).
        \label{eq:ras}
	\end{equation}
\end{definition}

Intuitively, the cRAS objective $\RWA$ requires that, whenever the context $\contextRWA$ is active persistently, the system eventually reaches the target set $\reachRWA$ and stays there while avoiding the obstacle set $\avoidRWA$ at all times. Here, the control proposition 
$\propC\in\AP_C$ identifies one feedback policy that the high-level controller may trigger to actuate the stochastic system~\eqref{eq:dyn}.

\subsection{Reduction to Context-Triggered RASs}\label{subsec:hybrid}
Since the environment may change the observation propositions at any time, and since a context also changes once the reach-part of the active cRAS has been fulfilled, the active cRAS is switched online, and the overall controller is inherently \emph{hybrid}. 
This renders the reduction sound: realizing each generated cRAS is sufficient to solve the original synthesis problem, as summarized next.

\begin{lemma}[\cite{nayak2023context}]\label{lemma:reduction}
For Problem~\ref{prob:MainProb}, the framework of~\cite{nayak2023context} produces a finite set of context-dependent RAS objectives $\{\RWA_1,\ldots,\RWA_m\}$, each triggered by its context. If each $\RWA_j$ is implemented by some control proposition $\propC_j\in\AP_C$ in the sense of Definition~\ref{def:rwa}, then the resulting hybrid state-feedback policy $\policy_{\mathrm{h}}$ solves Problem~\ref{prob:MainProb}.
\end{lemma}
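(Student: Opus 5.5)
The plan is to lift the deterministic correctness argument of~\cite{nayak2023context} to the almost-sure setting, working trace by trace on a full-measure set of disturbance realizations. First, I would recall the construction. The LTL formula $\spec$ is translated into a game graph whose vertices carry logical contexts, and a winning strategy template is computed for it. At each controller vertex $v$, the template's required edges define a target $\reachRWA_v$ and its prohibited edges define an avoid set $\avoidRWA_v$. Together with the context $\contextRWA_v$ given by the vertex label, this yields the finite family $\{\RWA_1,\ldots,\RWA_m\}$. The hybrid policy $\policy_{\mathrm{h}}$ tracks the current game vertex from the observed $\ldist(k)$ and the current labels $\labelfunc(x_k)$, and it activates the control proposition $\propC_j$ of the cRAS associated with that vertex. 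By construction, $\propC_j$ is active exactly while its context persists.

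Second, I would fix a logical disturbance $\ldist\in\Ldist$. For each $j$, let $\Omega_j$ be the event on which $\spec_{\propC_j}$ in~\eqref{eq:ras} holds along the closed-loop trace. By hypothesis, $\mathbb P_{\Omega}(\Omega_j)=1$. Because there are finitely many $j$, the intersection $\Omega^\star:=\bigcap_{j=1}^m\Omega_j$ also has probability one. The delicate point is that Definition~\ref{def:rwa} concerns the closed loop \emph{induced by} $\propC_j$, whereas under $\policy_{\mathrm{h}}$ the policies are switched. Because $\spec_{\propC_j}$ is prefixed by $\globally$, it constrains every suffix on which $\propC_j\wedge\contextRWA_j$ is active. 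I would therefore interpret implementation as holding from any reachable state at any switching time. I would then use the strong Markov property of~\eqref{eq:dyn}, which holds because the $w_k$ are i.i.d., to argue that each activation episode of $\propC_j$, started at a random stopping time, again satisfies the RAS property almost surely. There can be countably many activation episodes, and a countable union of null sets is null, so $\Omega^\star$ still has full measure.

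Third, for each $\omega\in\Omega^\star$ I would show that the trace $\trace_{\labelfunc,\ldist}(\traj_{\pi_{\mathrm{h}}}(\omega))$ corresponds to a play of the game that is consistent with the strategy template. Prohibited edges are never taken, because $\globally\neg\avoidRWA$ holds in every episode. If the logical context eventually stays fixed, then the active cRAS eventually reaches $\reachRWA$ and remains there, so the required edge is eventually taken, which gives the liveness part of the template. If instead the context changes infinitely often, the environment is responsible for each switch and the play moves through the graph accordingly. Since every template-compliant play is winning, this pathwise argument gives $\trace\vDash\spec$ for every $\omega\in\Omega^\star$, and hence with probability one.

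I expect the main obstacle to be the second step. Definition~\ref{def:rwa} gives a probability-one guarantee for a single closed loop, and this must be transferred to the switched, history-dependent loop with countably many restarts. The issue is to ensure that the hybrid switching does not select a measure-zero failure set. I would handle it by stating implementation uniformly over initial states and appealing to the strong Markov property at the switching stopping times. The purely combinatorial part, that template-compliant plays are winning, I would take directly from~\cite{nayak2023context}.
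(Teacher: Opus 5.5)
The paper gives no proof of this lemma. It is imported from \cite{nayak2023context}, whose correctness argument is for deterministic systems, and the passage to the almost-sure requirement of Problem~\ref{prob:MainProb} is left implicit. Your plan (fix $\ldist$, build a full-measure event, run the deterministic argument pathwise on it) therefore does more than the paper, and it is the right architecture. As written, however, the central step fails.

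In \eqref{eq:ras} the conjunct $\square\neg\avoidRWA$ sits in the consequent of $\square(\propC\wedge\contextRWA)$. So $\spec_{\propC_j}$ constrains only suffixes on which $\propC_j\wedge\contextRWA_j$ holds \emph{forever}, not every suffix on which it is currently active. Some activation intervals end, because the environment changes the context or because $\reachRWA_j$ is reached and the game moves on. On every such interval the formula holds vacuously, whatever the state does there. Hence $\omega\in\Omega^\star=\bigcap_j\Omega_j$ does not yield ``prohibited edges are never taken''. A run that enters $\Wc$ during an interrupted activation can satisfy every $\spec_{\propC_j}$ and still violate the conjunct $\square\neg\Wc$ of~\eqref{eq:examplespec}. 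Invoking the strong Markov property to say that each episode ``again satisfies the RAS property'' does not help, as long as that property is $\spec_{\propC_j}$ evaluated on the hybrid trace.

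What you need is a coupling.
\begin{itemize}
\item Let $\tau_n$ be the $n$-th activation time and $j_n$ the activated index. Since $\ldist$ is fixed, $\tau_n$ is a stopping time for $\mathcal F_k:=\sigma(w_0,\ldots,w_{k-1})$.
\item Let $\hat x^{(n)}$ be the run of the single-controller loop $\hat x_{k+1}=A\hat x_k+B\kappa_{j_n}(\hat x_k)+Cw_k$, with $\hat x^{(n)}_{\tau_n}=x_{\tau_n}$ and the same noise. It coincides with the hybrid run until the next switch.
\item Since $\kappa_{j_n}$ does not depend on $\ldist$, you may evaluate this loop under a logical disturbance that keeps $\contextRWA_{j_n}$ forever. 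If implementation holds for every initial state in a set containing $x_{\tau_n}$, it then gives $\square\neg\avoidRWA_{j_n}\wedge\Diamond\square\reachRWA_{j_n}$ for $\hat x^{(n)}$ almost surely.
\item Independence of $\{w_{\tau_n+k}\}_{k\in\N}$ from $\mathcal F_{\tau_n}$ makes each failure event null.
\item On the countable intersection of the success events, every finite activation interval is safe as a prefix of a safe run, and the last, infinite interval reaches and stays.
\end{itemize}

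Two ingredients are still not supplied by Definition~\ref{def:rwa} and must be imported from \cite{nayak2023context}.
\begin{itemize}
\item First, $x_{\tau_n}$ must lie in the domain of the newly activated controller. At least this means $x_{\tau_n}\notin\avoidRWA_{j_n}$, and for a controller of the form~\eqref{eq:control} also feasibility of~\eqref{eq:RMPC:orig} at $x_{\tau_n}$.
\item Second, you need a correctness result stated for individual runs. Do not re-derive it from ``template-compliant plays are winning''. Your one-line treatment of infinitely many context switches does not establish compliance: under repeated preemption a controller vertex can be revisited without its required edge ever being realized by the continuous state, so liveness constraints of the template need not be met. Apply the run-level argument of \cite{nayak2023context} on the full-measure event instead.
\end{itemize}
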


In view of Lemma~\ref{lemma:reduction}, the high-level synthesis returns a finite set of context-dependent RAS objectives, and it remains to realize each of them on the stochastic system~\eqref{eq:dyn}. The rest of the paper therefore focuses on the central problem of synthesizing, for a single context-dependent RAS objective $\RWA=(\contextRWA,\reachRWA,\avoidRWA)$, a feedback policy that implements it almost surely.

\section{Realizing RASs via Certificates}\label{sec:C_RMPC}

As stated in Section~\ref{sec:reduction}, the overall synthesis problem in Problem~\ref{prob:MainProb} can be reduced to synthesizing control policies for a finite set of cRAS specifications.
Hence, we focus on the problem of realizing a single RAS specification of the form $\finally\globally \reachRWA ~ \wedge ~  \globally \neg\avoidRWA$ (as in \eqref{eq:ras}) for a stochastic linear system.

\subsection{Certificate-Based Characterization of RASs}
To characterize when the RAS specification holds, we adopt certificate-based conditions from the literature~\cite{majumdar2024necessary,ames2019control}. These conditions provide sufficient guarantees for almost-sure satisfaction of~\eqref{eq:ras}. Before stating the theorem, we provide the following assumption that characterizes robust forward invariance of the target set and is therefore essential for the \textit{stay} part of the specification. 

\begin{assumption}[Robustly Invariant Target Sets]\label{assum:target_invariance}
The target set $\mathcal R$ is robustly forward invariant, i.e., there exists a feedback controller $\kappa_{\mathcal R}: X\rightarrow U$ such that for all $x\in\mathcal R$ and all $w\in W$,
\[
Ax+B\kappa_{\mathcal R}(x)+Cw \in \mathcal R.
\]
\end{assumption}
Indeed, if no feedback controller can keep the state inside the target set $\mathcal R$ under all admissible disturbances, then the stay requirement cannot be guaranteed after the target has been reached. In the proposed framework, the corresponding invariant controller is constructed through the local optimization problem introduced later in the paper, together with its tractable reformulation. Therefore, this assumption should be viewed as a fundamental requirement for satisfying the specification rather than as an additional restriction of the proposed approach.

\begin{theorem}[RAS Certificates]
\label{thm:RWA}
For the stochastic system~\eqref{eq:dyn}, suppose there exist functions $h, g : \mathbb{R}^{n_x} \to \mathbb R,$ a locally bounded function $r : \mathbb{R}^{n_x} \to \mathbb R$, and constants $\varepsilon > 0$ and $\delta > 0$, such that for every $x \in \mathbb{R}^{n_x}$ there exists a control feedback law $u=\kappa(x) \in  U$ satisfying the following conditions:

\medskip
\noindent\textnormal{\textbf{(Avoidance)}}
\begin{subequations}
\begin{align}
& g(x) < 0, && \forall x \in \mathcal A, \label{eq:AV1}\\
& g(x) \geq 0, && \forall x \in \mathbb{R}^{n_x} \setminus \mathcal A, \label{eq:AV2}\\
& g( A x + B u + C w) \ge 0,
&& \forall x \in \mathbb{R}^{n_x} \setminus \mathcal A,\ \forall w \in  W, \label{eq:AV3}
\end{align}
\end{subequations}

\medskip
\noindent\textnormal{\textbf{(Reach)}}
\begin{subequations}
\begin{align}
& \mathbb P_w\left(
r( A x + B u + C w) - r(x) \le -\delta
\right) \ge \varepsilon,\nonumber\\ &\qquad\qquad\qquad  \forall x \!\in\! \mathbb{R}^{n_x} \!\!\setminus\!\! (\mathcal A \!\cup\! \mathcal R), \label{eq:RE1}\\
& r(x) \le 0 \Rightarrow x \in \mathcal R, \label{eq:RE2}
\end{align}
\end{subequations}

\medskip
\noindent\textnormal{\textbf{(Stay)}}
\begin{subequations}
\begin{align}
& h(x) \le 0, && \forall x \in \mathcal R, \label{eq:ST1}\\
& h(x) > 0, && \forall x \in \mathbb{R}^{n_x} \setminus \mathcal R, \label{eq:ST2}\\
& h( A x + B u + C w) \le 0,
&& \forall x \in \mathcal R,\ \forall w \in W.\label{eq:ST3}
\end{align}
\end{subequations}

Then, for every initial state $x_0 \in \mathbb{R}^{n_x} \setminus \mathcal A$ and with the state-feedback policy $\kappa(x)$, the RAS specification~\eqref{eq:ras} holds almost surely.
\end{theorem}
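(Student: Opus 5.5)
We argue pathwise, splitting the RAS specification into its safety part $\globally\neg\avoidRWA$ and its liveness part $\finally\globally\reachRWA$. Almost-sure satisfaction of the conjunction follows once each part holds almost surely, since a finite intersection of probability-one events has probability one. Throughout, the trajectory is generated by $x_{k+1}=Ax_k+B\kappa(x_k)+Cw_k$ with i.i.d.\ disturbances $w_k\in W$.

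\textbf{Avoidance.} We prove by induction that $g(x_k)\ge 0$ for all $k$, for every disturbance realization, not only almost surely. The base case follows from $x_0\notin\avoidRWA$ and \eqref{eq:AV2}. For the inductive step, suppose $g(x_k)\ge0$. Then \eqref{eq:AV1} gives $x_k\notin\avoidRWA$, so \eqref{eq:AV3} applies with $w=w_k\in W$ and yields $g(x_{k+1})\ge0$. Consequently $x_k\notin\avoidRWA$ for all $k$, again by \eqref{eq:AV1}.

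\textbf{Stay.} If $x_{k_0}\in\reachRWA$ for some $k_0$, then \eqref{eq:ST1} and \eqref{eq:ST3} give $h(x_{k_0+1})\le0$. By \eqref{eq:ST2} this means $x_{k_0+1}\in\reachRWA$, and by induction $x_k\in\reachRWA$ for all $k\ge k_0$. This again holds for every realization. It therefore suffices to show that the hitting time $\tau:=\inf\{k: x_k\in\reachRWA\}$ is almost surely finite.

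\textbf{Reach (main obstacle).} Fix $x_0$ and let $M:=\sup\{r(x): x\in X\}$. This is finite: $X$ is bounded and $r$ is locally bounded, so $r$ is bounded on the compact closure of $X$. We assume the trajectory stays in $X$, as the paper's setup implicitly requires; this is the point I would need to make explicit. Set $N:=\lceil (M+1)/\delta\rceil$, using $r\ge$ some lower bound on $X$ if needed so that $M-N\delta<0$ is reachable; note only the upper bound matters here. Consider $N$ consecutive steps starting at any time $k$ with $x_k\notin\reachRWA$. On the event that each step decreases $r$ by at least $\delta$ (or the target is hit earlier), we get $r(x_{k+N})\le M-N\delta<0$, and then \eqref{eq:RE2} forces entry into $\reachRWA$ within $N$ steps. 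Because the $w$'s are i.i.d.\ and independent of the past, condition \eqref{eq:RE1} applies at each step; it is valid since avoidance already holds, so the state lies outside $\avoidRWA\cup\reachRWA$ while not yet in the target. Conditioning step by step through the Markov property, the probability of this event given $\mathscr F_k$ is at least $\varepsilon^N$. Hence $\mathbb P(\tau> jN)\le(1-\varepsilon^N)^j\to0$, and $\tau<\infty$ almost surely.

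The delicate points are the measurability and conditioning argument, namely applying \eqref{eq:RE1} to the conditional law of $w_k$ given the history, and the boundedness of $r$ along trajectories. Combining the three parts, the event $\{\tau<\infty\}$ has probability one and on it the specification holds, which yields \eqref{eq:ras} almost surely.
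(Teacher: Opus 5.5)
Your proof is correct within the paper's standing setup, where trajectories remain in the bounded set $X$. The Avoid and Stay parts are the same pathwise inductions as in the paper. You are slightly more explicit there: you use \eqref{eq:AV1} to turn $g(x_k)\ge 0$ back into $x_k\notin\mathcal A$, and the contrapositive of \eqref{eq:ST2} to turn $h\le 0$ into membership in $\mathcal R$.

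The Reach part, however, takes a different route. The paper defers to \cite{majumdar2024necessary} and sketches a layer argument. It splits a bounded sublevel set of $r$ into finitely many $\delta$-layers, notes that staying forever in a non-terminal layer has probability zero, and concludes that $\{r\le 0\}$ is eventually reached. You instead use boundedness of $X$ and local boundedness of $r$ to get one bound $M=\sup_X r<\infty$. You then run a geometric-trials argument: from any non-target state, the target is hit within $N=\lceil (M+1)/\delta\rceil$ steps with conditional probability at least $\varepsilon^N$, so $\mathbb P(\tau>jN)\le(1-\varepsilon^N)^j$.

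Your version is self-contained and closes a gap the layer sketch leaves implicit. Since $r$ may increase with probability up to $1-\varepsilon$, the state can climb back to higher layers or leave the sublevel set. ``Each layer is exited almost surely'' does not by itself imply reaching the bottom one. Your uniform $N$-step bound handles both issues at once.

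The price is the assumption that trajectories stay in $X$, and you were right to flag it. It is implicit in the paper's setup ($x_k\in X$) but absent from the theorem as stated on $\mathbb R^{n_x}$, where it is genuinely needed. Take $n_x=1$, $A=B=C=1$, $\mathcal A=\emptyset$, $g\equiv 0$, $\mathcal R=(-\infty,0]$, $r(x)=h(x)=x$, $\kappa(x)=0$ for $x>0$, $\kappa(x)=-2$ on $\mathcal R$, and $w$ uniform on $[-1,2]$. All conditions then hold with $\delta=1/2$ and $\varepsilon=1/6$. Yet from a large initial state, the positively drifting walk avoids $\mathcal R$ forever with positive probability. Without bounded trajectories, one needs extra structure on $r$, such as a supermartingale-type condition, to rule out such escape.

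Incidentally, your aside about a lower bound on $r$ can be dropped: only $M$ is used, together with $N\ge 1$.
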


\begin{proof}
Conditions~\eqref{eq:AV1}--\eqref{eq:AV3} ensure avoidance of the set $\mathcal A$.
In particular,~\eqref{eq:AV1}--\eqref{eq:AV2} characterize the sign of the avoidance function $g$, while~\eqref{eq:AV3} guarantees non-negativity of $g$ is preserved along trajectories for all disturbances. Hence, starting from any state $x \in \mathbb{R}^{n_x} \setminus \mathcal A$, the system never enters the avoid set $\mathcal A$.

For states outside $\mathcal A \cup \mathcal R$, the reach conditions ~\eqref{eq:RE1}--\eqref{eq:RE2} ensure a uniform
positive probability of a decrease in the function $r$. Specifically,~\eqref{eq:RE1} enforces a uniform positive probability of decrease in the reach function $r$ outside $\mathcal A \cup \mathcal R$, while~\eqref{eq:RE2} associates the nonpositive level set of $r$ with the target set.
Since $r$ is locally bounded and the disturbance sequence is i.i.d., the probabilistic decrease condition~\eqref{eq:RE1} implies that the event of never reaching the set $\mathcal R$ has probability zero in the limit.
A detailed proof of this claim is provided in~\cite{majumdar2024necessary}.
Hence, the system reaches $\mathcal R$ almost surely.

Finally, once the system enters $\mathcal R$, the stay conditions~\eqref{eq:ST1}--\eqref{eq:ST3} imply that
$h(x)\le 0$ is preserved under all disturbances.
Therefore, the system remains in $\mathcal R$ for all future times with probability one.
\end{proof}
Note that~\eqref{eq:RE1} is a probabilistic Lyapunov-type decrease condition. Furthermore,~\eqref{eq:RE2} implies that the target set $\mathcal R$ contains the non-positive sublevel set of $r$. The argument in~\cite{majumdar2024necessary} proceeds by observing that, since $r$ is locally bounded, every bounded sublevel set can be partitioned into finitely many layers according to the value of $r$.
Condition~\eqref{eq:RE1} guarantees that, whenever the state is outside $\mathcal A\cup\mathcal R$, there is a uniform positive probability $\varepsilon$ of decreasing the certificate by at least $\delta$. As a consequence, the probability of remaining forever in any non-terminal layer is zero. Since only finitely many layers exist within a bounded sublevel set, the trajectory almost surely exits each layer and eventually reaches the non-positive sublevel set of $r$. By~\eqref{eq:RE2}, this implies eventual entry into the target set $\mathcal R$ with probability one.

\begin{assumption}\label{assum:cvx}
   We assume that the avoid and the target sets admit convex representations.
Specifically, we consider
\begin{equation*}
    \mathcal A = \{x \in \mathbb R^{n_x} : g(x) < 0\},
\,
\mathcal R = \{x \in \mathbb R^{n_x} : h(x) \le 0\},
\end{equation*}
where $g$ and $h$ are proper, closed, convex functions. Furthermore, the disturbance set $ W$ is compact and convex, and satisfies $\mathbb P_w(\|w\| \le \eta) > 0$ for all sufficiently small $\eta>0$.

\end{assumption}
Under this construction, conditions~\eqref{eq:AV1},~\eqref{eq:AV2},~\eqref{eq:ST1}, and~\eqref{eq:ST2} are satisfied by definition, while the remaining conditions will be enforced through the control synthesis developed in the next section.

\subsection{Enforcing RASs via Robust MPC}\label{subsec:C_RMPC}

In this section, we design a control strategy that enforces the RAS specification using a combination of robust model predictive control (MPC) and a local switching law. Robust MPC is used to explicitly enforce the avoidance constraints while driving the system toward the target set, and we complement it with a local controller that ensures robust forward invariance of the target set. The resulting switching strategy satisfies the Reach, Avoid, and Stay conditions of Theorem~\ref{thm:RWA}.

We fix a cRAS objective \(\RWA=(\contextRWA,\mathcal R,\mathcal A)\), where \(\contextRWA\subseteq \AP_O\) denotes the logical context under which the objective is active, \(\mathcal R\) is the target set and \(\mathcal A\) is the avoid set. For readability, we keep the dependence on \(\RWA\) explicit in the notation below.

We consider the following robust MPC problem:
\begin{subequations}\label{eq:RMPC:orig}
\begin{align}
V_\RWA(x) = \min_{z,u}\quad
& T_\RWA(z_N) + \sum_{i=0}^{N-1} \ell_\RWA(z_i,u_i) \\
\text{s.t.}\quad
& \forall i\in\{0,\ldots,N-1\}:\nonumber \\
& z_{i+1} = A z_i + B u_i, \label{eq:cons1:org} \\
& u_i \in  U, \quad z_{i}\in \mathcal X^\RWA_{i}, \label{eq:cons2:org} \\
& z_N \in \mathcal X^\RWA_{\mathrm f}, \quad z_0 = x. \label{eq:cons3:org}
\end{align}
\end{subequations}
Here, \(T_\RWA\) is the terminal cost, \(\ell_\RWA\) is the stage cost, \(N\) is the prediction horizon, \(V_\RWA\) is the MPC value function and \(\mathcal X^\RWA_{\mathrm f}\) is the terminal set associated with the fixed RAS objective \(\RWA\). The constraints in~\eqref{eq:cons1:org}--\eqref{eq:cons3:org} enforce the nominal system dynamics over the prediction horizon, the admissible input constraint, robust avoidance, and the terminal condition, respectively.

The sets \(\mathcal X^\RWA_{i}\) in constraint~\eqref{eq:cons2:org} encode robust avoidance by requiring that the avoidance function \(g\) remains nonnegative for all disturbance realizations over the horizon. They are defined as
\begin{equation*}
\mathcal X^\RWA_{i}
:=
\Bigl\{
x\in X
\,\Big|\,
\min_{\bar w\in \bar{ W}}
g(x + C_{i-1} \bar w)\ge 0
\Bigr\}, 
\end{equation*}
for all $i\in \{0,1,\ldots, N\}$, where \(\bar{ W}:= W\times \cdots \times  W\) (\(N\) times) and
\[
C_i
:=
\bigl[
A^{i}C \;\; A^{i-1}C \;\; \cdots \;\; C \;\; 0 \;\; \cdots \;\; 0
\bigr]
\in \mathbb R^{n_x\times (N n_w)},
\]
for all $i\in \{0,1,\ldots, N-1\}$ and $C_{-1}:=0$. The optimization~\eqref{eq:RMPC:orig} is solved in a receding-horizon manner by applying the first optimal input. The corresponding MPC feedback law is
\[
\kappa^\RWA_{\mathrm{MPC}}(x) = u_0^\star,
\]
where \(u_0^\star\) is the first optimal input of~\eqref{eq:RMPC:orig}.

To promote reachability of the target set \(\mathcal R\), we select quadratic stage and terminal costs centered at a reference point \(x_{\mathrm{ref}}\in\mathcal R\) as an interior point of $\mathcal R$ (for instance, the center of the target set). More specifically, we define
\begin{equation}\label{eq:stage_terminal}
\ell_\RWA(x,u)
=
\|x - x_{\mathrm{ref}}\|^2_{Q_\ell}
+
\|u-u_{\mathrm{ref}}\|^2_{R_\ell},
\,
T_\RWA(x)
=
\|x - x_{\mathrm{ref}}\|^2_{Q_T},
\end{equation}
where \(u_{\mathrm{ref}}=\kappa_{\mathcal R}(x_{\mathrm{ref}})\), \(Q_\ell\succeq 0\), and \(Q_T,R_\ell\succ 0\).

We next establish that the robust MPC problem~\eqref{eq:RMPC:orig} is recursively feasible under nominal closed-loop evolution, that is, when the state evolves according to \(x_{k+1}=Ax_k+B\kappa^\RWA_{\mathrm{MPC}}(x_k)\) under the fixed objective \(\RWA\). This property, referred to as nominal recursive feasibility, ensures that feasibility at the current time step implies feasibility at all subsequent time steps under the nominal dynamics and provides the reachability certificate introduced in Theorem~\ref{thm:RWA}. To guarantee this property, we impose the following standard terminal condition.

\begin{assumption}\label{Assum:terminal}
The terminal set \(\mathcal X^\RWA_{\mathrm f}\) is nonempty, it satisfies \(\mathcal X^\RWA_{\mathrm f}\subseteq \mathcal X^\RWA_N\), and there exists a terminal feedback control \(\kappa_f^\RWA:X\to U\) such that, for all \(x\in \mathcal X^\RWA_{\mathrm f}\),
\begin{subequations}
\begin{align}
T_\RWA(Ax+B\kappa_f^\RWA(x)) - T_\RWA(x)
&\le -\ell_\RWA(x,\kappa_f^\RWA(x)), \label{eq:decrease} \\
Ax+B\kappa_f^\RWA(x)
&\in \mathcal X^\RWA_{\mathrm f}. \label{eq:inv}
\end{align}
\end{subequations}
\end{assumption}

Assumption~\ref{Assum:terminal} is standard in the MPC literature (see, e.g.,~\cite{MPCbook}). Inequality~\eqref{eq:decrease} imposes a terminal decrease condition and plays the role of a Lyapunov-type inequality. In the quadratic setting considered in this paper, it typically reduces to a Riccati-type condition. The invariance condition~\eqref{eq:inv} requires the terminal set to be forward invariant and is the key ingredient ensuring nominal recursive feasibility of the MPC problem. 

In practice, the terminal set \(\mathcal X^\RWA_{\mathrm f}\) is typically selected as a sufficiently small neighborhood of the target set \(\mathcal R\) (or of the reference point \(x_{\mathrm{ref}}\in\mathcal R\)). The existence of a terminal feedback controller satisfying~\eqref{eq:decrease}--\eqref{eq:inv} depends on the control authority available to the system. In particular, for fully actuated systems, such terminal controllers can often be constructed for comparatively large terminal sets. More generally, Assumption~\ref{Assum:terminal} can be viewed as requiring the terminal region to admit a locally stabilizing and invariant control law.

Note that we only require nominal recursive feasibility of the MPC problem here. Indeed, the avoidance condition is enforced robustly through the sets $\mathcal X_i^\RWA$, while the stay condition is handled separately through the local invariant controller. Hence, recursive feasibility is used to construct a shifted feasible solution and establish a decrease property of the MPC value function along the nominal closed-loop dynamics. This decrease property is the key ingredient in constructing the reachability certificate required by Theorem~\ref{thm:RWA}. The probabilistic decrease condition of the actual stochastic system is then obtained by combining the nominal decrease with the disturbance assumption in~\ref{assum:cvx}.

\begin{theorem}\label{thm:MPC}
Under Assumptions~\ref{assum:cvx} and~\ref{Assum:terminal}, the following hold for the MPC~\eqref{eq:RMPC:orig}:
\begin{enumerate}
    \item for every initial condition \(x_0 \in \mathcal X^\RWA_{\mathrm f}\), the MPC problem~\eqref{eq:RMPC:orig} is feasible at time \(0\);
    \item the MPC problem~\eqref{eq:RMPC:orig} is nominal recursively feasible;
    \item the avoidance function \(g\) satisfies the Avoid conditions of Theorem~\ref{thm:RWA};
    \item the value function \(V_\RWA\) admits a decrease property outside \(\mathcal R\) that allows the construction of a reachability certificate \(r\) satisfying conditions~\eqref{eq:RE1}--\eqref{eq:RE2}.
\end{enumerate}
\end{theorem}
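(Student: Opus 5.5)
The four items are proved in order, using the standard MPC machinery: a terminal-controller initialization for item 1, the shifted candidate sequence for items 2 and 4, and the structure of the tightened sets $\mathcal X_i^\RWA$ for item 3.

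\emph{Item 1.} For $x_0\in\mathcal X^\RWA_{\mathrm f}$, build a candidate by rolling out the terminal law:
\[
u_i=\kappa_f^\RWA(z_i),\qquad z_{i+1}=Az_i+Bu_i .
\]
By \eqref{eq:inv}, every $z_i$ stays in $\mathcal X^\RWA_{\mathrm f}$, so the terminal constraint holds. The state constraints $z_i\in\mathcal X_i^\RWA$ require a nesting $\mathcal X_N^\RWA\subseteq\mathcal X_i^\RWA$. I will obtain it from the structure of $C_i$: its nonzero blocks for index $i$ are contained in those for $N-1$. Since $0\in W$ (or, more generally, by choosing the minimizing $\bar w$ with the trailing blocks set appropriately), the minimum over the larger reachable tube is no larger. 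This nesting step may need $0\in W$ to be stated explicitly, and I will flag that. Together with $\mathcal X^\RWA_{\mathrm f}\subseteq\mathcal X^\RWA_N$, this gives feasibility.

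\emph{Item 2.} Let $(u_0^\star,\dots,u_{N-1}^\star)$ be optimal at $x$, and let $x^+=Ax+Bu_0^\star=z_1^\star$. Take the shifted sequence
\[
(u_1^\star,\dots,u_{N-1}^\star,\kappa_f^\RWA(z_N^\star)),
\]
whose states are $z_1^\star,\dots,z_N^\star,Az_N^\star+B\kappa_f^\RWA(z_N^\star)$. Feasibility needs $z_{i+1}^\star\in\mathcal X_i^\RWA$, which follows from $z_{i+1}^\star\in\mathcal X_{i+1}^\RWA$ and the same nesting $\mathcal X_{i+1}^\RWA\subseteq\mathcal X_i^\RWA$. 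The last state lies in $\mathcal X^\RWA_{\mathrm f}$ by \eqref{eq:inv}. Comparing costs with \eqref{eq:decrease} then gives the standard inequality
\[
V_\RWA(x^+)\le V_\RWA(x)-\ell_\RWA(x,u_0^\star).
\]

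\emph{Item 3.} Since $C_0=[C\;0\cdots0]$, the constraint $z_1\in\mathcal X_1^\RWA$ means $g(Ax+Bu_0^\star+Cw)\ge0$ for all $w\in W$. This is exactly \eqref{eq:AV3} with $\kappa=\kappa_{\mathrm{MPC}}^\RWA$. Conditions \eqref{eq:AV1}--\eqref{eq:AV2} hold by Assumption~\ref{assum:cvx}.

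\emph{Item 4, the main obstacle.} I take $r(x)=V_\RWA(x)-c$ for a level $c>0$ such that the sublevel set $\{V_\RWA\le c\}$ is contained in $\mathcal R$. Such a $c$ exists because $x_{\mathrm{ref}}$ is an interior point of $\mathcal R$ and $V_\RWA(x)\ge\|x-x_{\mathrm{ref}}\|_{Q_\ell}^2$, or via the $Q_T$ term if $Q_\ell$ is singular. With this choice \eqref{eq:RE2} holds. Outside $\mathcal R$, the bound $V_\RWA\ge c$ together with $R_\ell\succ0$ and the nominal decrease from item 2 gives a nominal drop of at least some $2\delta>0$. Extracting a uniform lower bound on $\ell_\RWA$ here is delicate when $Q_\ell$ is only semidefinite, and I would handle it by bounding the $N$-step cost from below by the terminal term.

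To pass from the nominal to the stochastic step, I would use that $V_\RWA$ is a convex parametric QP value function, hence locally Lipschitz on the bounded feasible set; let $L$ be a Lipschitz constant there. Choose $\eta$ with $L\|C\|\eta\le\delta$. Then
\[
V_\RWA(x^++Cw)\le V_\RWA(x^+)+\delta \quad\text{whenever } \|w\|\le\eta,
\]
so the drop of at least $\delta$ occurs with probability at least $\varepsilon:=\mathbb P_w(\|w\|\le\eta)>0$ by Assumption~\ref{assum:cvx}. This gives \eqref{eq:RE1}. Local boundedness of $r$ follows from continuity on the feasible domain, where I would assign a large constant value outside the domain.

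The hardest step is making the Lipschitz argument legitimate. Under disturbances the perturbed state $x^++Cw$ need not remain MPC-feasible, and the robust tightening must be shown to ensure it does. I would first try to show that $x^++Cw$ admits the shifted candidate, using the robust margins built into the sets $\mathcal X_i^\RWA$.
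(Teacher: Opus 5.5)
Items 1--3 follow the paper's proof: nesting of the tightened sets, the terminal-law rollout, the shifted sequence, and $z_1^\star\in\mathcal X^{\RWA}_1$ with $C_0=[C\;0\cdots0]$. You need not add $0\in W$ as a hypothesis. $W$ is closed, and $\mathbb P_w(\|w\|\le\eta)>0$ for all small $\eta$ puts points of $W$ arbitrarily close to $0$; the paper's zero-padding step uses the same fact.

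The genuine gap is in item 4, in passing from the nominal to the stochastic decrease. You justify the Lipschitz constant $L$ by calling $V_{\RWA}$ a convex parametric QP value function, but the problem is not convex. The constraint $z_i\in\mathcal X^{\RWA}_i$ says $z_i$ avoids the convex set $\{x:\exists\,\bar w\in\bar W,\ x+C_{i-1}\bar w\in\mathcal A\}$. This is a reverse-convex constraint, so the feasible set is nonconvex. $V_{\RWA}$ can then jump when a component of the admissible input sequences disappears under a small change of $x$, e.g.\ when a one-step jump across a thin obstacle stops being admissible under the input bound. So neither Lipschitz continuity nor continuity is available.

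Moreover, you never show that $x^++Cw$ is feasible, and the repair you sketch fails as stated. Reusing $u_1^\star,\dots,u_{N-1}^\star$ gives $\tilde z_i=z^\star_{i+1}+A^iCw$. This does satisfy the $g$-part of $\tilde z_i\in\mathcal X^{\RWA}_i$ for $i\le N-1$, since the extra disturbance is absorbed by the extra block of $C_i$. But the closing state $Az_N^\star+B\kappa_f^{\RWA}(z_N^\star)+A^NCw$ need not lie in $\mathcal X^{\RWA}_{\mathrm f}$, because \eqref{eq:inv} is only a nominal invariance condition; $\tilde z_i\in X$ is not guaranteed either. Without feasibility, nothing prevents $x^++Cw$ from landing where $r$ equals your large constant, and \eqref{eq:RE1} does not follow.

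Two further points. First, the per-step drop from item 2 is exactly $\ell_{\RWA}(x,u_0^\star)\ge\|x-x_{\mathrm{ref}}\|^2_{Q_\ell}\ge\lambda_{\min}(Q_\ell)\psi^2$ outside $\mathcal R$, where $B_\psi(x_{\mathrm{ref}})\subseteq\mathcal R$. This is how the paper gets $c_{\RWA}$, tacitly using $Q_\ell\succ0$. The bound $V_{\RWA}\ge c$ and $R_\ell\succ0$ play no role. Your $Q_T$ fallback bounds $z_N^\star$ rather than $x$, so it gives neither a uniform drop nor $\{V_{\RWA}\le c\}\subseteq\mathcal R$.

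Second, for comparison: the paper takes $r=V_{\RWA}$ outside $\mathcal R$ and $r=-\rho$ inside. It handles the stochastic step by invoking continuity of $w\mapsto V_{\RWA}(Ax+B\kappa^{\RWA}_{\mathrm{MPC}}(x)+Cw)$ at each fixed $x$. That relies on the same continuity claim, tacitly assumes feasibility near $x^+$, and yields an $\eta_{\RWA}$ that depends on $x$. So your concerns are well founded, and a uniform $\eta$ would improve on the paper, but your proposal does not deliver it.

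One way to close item 4 without any regularity of $V_{\RWA}$ is to bound $V_{\RWA}(x^++Cw)$ by the cost of the perturbed shifted candidate. That cost differs from the nominal shifted cost by at most $K\|w\|$, uniformly on the bounded state set, giving $V_{\RWA}(x^++Cw)\le V_{\RWA}(x)-c_{\RWA}+K\|w\|$. This still needs an added hypothesis making the candidate feasible for small $w$. For instance, $\kappa_f^{\RWA}$ could map $\mathcal X^{\RWA}_{\mathrm f}$ into itself with a uniform margin, with similar slack with respect to $X$.
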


\begin{proof}
We first prove that
\[
\mathcal X^\RWA_{i+1}\subseteq \mathcal X^\RWA_i,
\qquad
i\in\{0, 1,\ldots,N-1\}.
\]
Let \(x\in \mathcal X^\RWA_{i+1}\) be fixed, and suppose that an optimal solution of
\[
\min_{\bar w\in \bar{ W}} g(x + C_{i-1}\bar w)
\]
is given by \(\tilde w = (\tilde w_0,\dots,\tilde w_{N-1})\). Define \(\hat w=(0,\dots,\tilde w_{N-1})\). Then
\begin{align*}
\min_{\bar w\in \bar{ W}} g(x+C_{i-1}\bar w)
&= g(x+C_{i-1}\tilde w) \\
&= g(x+C_i\hat w) \\
&\ge \min_{\bar w\in \bar{ W}} g(x+C_i\bar w).
\end{align*}
Since \(x\in \mathcal X^\RWA_{i+1}\), the right-hand side is nonnegative, implying \(x\in \mathcal X^\RWA_i\).

\smallskip
1) For the initial feasibility for every \(x_0\in \mathcal X^\RWA_{\mathrm f}\), fix any \(x_0\in \mathcal X^\RWA_{\mathrm f}\). By Assumption~\ref{Assum:terminal}, for each \(x\in \mathcal X^\RWA_{\mathrm f}\) there exists a control input \(\kappa_f^\RWA(x)\in U\) such that \(Ax+B\kappa_f^\RWA(x)\in \mathcal X^\RWA_{\mathrm f}\). Hence, defining
\[
z_0=x_0,
\qquad
z_{i+1}=Az_i+B\kappa_f^\RWA(z_i),
\qquad
i=0,\dots,N-1,
\]
yields a nominal state sequence satisfying \(z_i\in\mathcal X^\RWA_{\mathrm f}\) for all \(i\in 0,\dots,N\). Since \(\mathcal X^\RWA_{\mathrm f}\subseteq \mathcal X^\RWA_i\) for all \(i\in 0,\dots,N-1\), this sequence satisfies the state constraints of the MPC problem, and the associated inputs \(\kappa_f^\RWA(z_i)\in U\) satisfy the input constraints. Therefore, the MPC problem is feasible at time \(0\) for every \(x_0\in \mathcal X^\RWA_{\mathrm f}\).

\smallskip
2) To show recursive feasibility, assume that \(\{x=z_0^\star,z_1^\star,\dots,z_N^\star\}\) and \(\{u_0^\star,u_1^\star,\dots,u_{N-1}^\star\}\) is an optimal solution with \(z_0^\star=x\). The successor nominal state is then \(z_1^\star=A x + B \kappa^\RWA_{\mathrm{MPC}}(x)\). Consider the shifted sequence \(\tilde z_i=z_{i+1}^\star\) for \(i\in 0,\dots,N-1\), \(\tilde u_i=u_{i+1}^\star\) for \(i=0,\dots,N-2\), \(\tilde u_{N-1}=\kappa_f^\RWA(z_N^\star)\), and \(\tilde z_N = A z_N^\star + B\kappa_f^\RWA(z_N^\star)\). By feasibility of the original solution and \(\mathcal X^\RWA_{i+1}\subseteq \mathcal X^\RWA_i\), we have \(\tilde z_i\in\mathcal X^\RWA_i\) for \(i=0,\dots,N-1\), and from Assumption~\ref{Assum:terminal}, \(\tilde z_N\in\mathcal X^\RWA_N\). Hence, the shifted sequence is feasible at \(z_1^\star\).

\smallskip
3) We next show that the avoidance function \(g\) satisfies the Avoid conditions \eqref{eq:AV1}--\eqref{eq:AV3} of Theorem~\ref{thm:RWA}. By Assumption~\ref{assum:cvx}, \eqref{eq:AV1} holds by construction, and \eqref{eq:AV2} follows immediately since \(x\notin\mathcal A\) implies \(g(x)\ge 0\). To show \eqref{eq:AV3}, let \(x\notin\mathcal A\) be such that the MPC problem~\eqref{eq:RMPC:orig} is feasible, and let \(\{z_0^\star=x,z_1^\star,\dots,z_N^\star\}\) and \(\{u_0^\star,\dots,u_{N-1}^\star\}\) be a feasible solution. By the constraint \(z_1^\star\in\mathcal X^\RWA_{1}\) and the definition of \(\mathcal X^\RWA_{1}\), we have
\[
\min_{\bar w\in\bar{ W}} g(z_1^\star + C_0\bar w)\ge 0.
\]
Since \(C_0\bar w = Cw_0\) and \(w_0\) ranges over \( W\), this gives
\[
\min_{w\in W} g(z_1^\star + Cw)\ge 0.
\]
Using \(z_1^\star = A x + B u_0^\star = Ax + B\kappa^\RWA_{\mathrm{MPC}}(x)\), it follows that
\[
g(Ax+B\kappa^\RWA_{\mathrm{MPC}}(x)+Cw)\ge 0,
\qquad
\forall w\in W.
\]
Thus, condition~\eqref{eq:AV3} holds with the feedback law \(u=\kappa^\RWA_{\mathrm{MPC}}(x)\) whenever the MPC controller is applied. Therefore, \(g\) satisfies the Avoid conditions of Theorem~\ref{thm:RWA}.

\smallskip
4) Evaluating the cost of the shifted feasible sequence and using optimality yields
\begin{align*}
V_\RWA(z_1^\star)
&\le
T_\RWA(\tilde z_N)
+
\sum_{i=1}^{N-1} \ell_\RWA(z_i^\star,u_i^\star)
+
\ell_\RWA(z_N^\star,\kappa_f^\RWA(z_N^\star))
\\
&\le
V_\RWA(x) - \ell_\RWA(x,u_0^\star).
\end{align*}
The second inequality follows from Assumption~\ref{Assum:terminal}, since
\[
T_\RWA(\tilde z_N) - T_\RWA(z_N^\star)
\le
-\ell_\RWA(z_N^\star,\kappa_f^\RWA(z_N^\star)).
\]
Consequently, the MPC value function decreases along the nominal closed-loop evolution outside the target set \(\mathcal R\).

Since $x_{\mathrm{ref}}$ is chosen in the interior of $\mathcal R$, there exists $\psi>0$ such that $B_\psi(x_{\mathrm{ref}})\subseteq \mathcal R$, where $B_\psi(x_{\mathrm{ref}})$ is the closed Euclidean ball centered at $x_{\mathrm{ref}}$ with radius $\psi$. Therefore, for every $x\notin\mathcal R$ we have $\|x-x_{\mathrm{ref}}\|_2\ge \psi$, and hence the quadratic stage cost satisfies $\ell_\RWA(x,u)\ge \lambda_{\min}(Q_\ell)\psi^2=:c_\RWA>0$ for all admissible inputs $u$ outside the target set \(\mathcal R\). Therefore,
\begin{equation}
V_{\RWA}\bigl(Ax + B\kappa_{\mathrm{MPC}}^{\RWA}(x)\bigr) - V_{\RWA}(x) \le -c_{\RWA},
\label{eq:nominal_decrease}
\end{equation}
for all $x \notin \mathcal R$. For any fixed $x\in X$, define
\[
\phi_{\RWA}(w)
:=
V_{\RWA}\bigl(Ax + B\kappa_{\mathrm{MPC}}^{\RWA}(x) + Cw\bigr) - V_{\RWA}(x).
\]
By~\eqref{eq:nominal_decrease}, we have $\phi_{\RWA}(0)\le -c_{\RWA}$. Since $V_{\RWA}$ is continuous and $W$ is compact, the function $\phi_{\RWA}$ is continuous in $w$. Therefore, there exists $\eta_{\RWA}>0$ such that
\[
\phi_{\RWA}(w)\le -\frac{c_{\RWA}}{2},
\qquad
\forall\, w\in W \text{ with } \|w\|\le \eta_{\RWA}.
\]
Define
\[
\delta_{\RWA}:=\frac{c_{\RWA}}{2},
\qquad
\varepsilon_{\RWA}:=\mathbb P_w(\|w\|\le \eta_{\RWA})>0.
\]
Then condition~\eqref{eq:RE1} follows immediately by the disturbance assumption in Assumption~\ref{assum:cvx}. Therefore, \(V_\RWA\) can be used as the basis of a reachability certificate. More precisely, define
\[
r(x):=
\begin{cases}
V_\RWA(x), & x\notin\mathcal R,\\
-\rho(x), & x\in\mathcal R,
\end{cases}
\]
where \(\rho(x)\ge 0\) is any continuous function satisfying \(\rho(x)=0\) on \(\partial\mathcal R\) and \(\rho(x)>0\) in the interior of \(\mathcal R\). Since the stage and terminal costs are quadratic and the prediction horizon is finite, the value function \(V_\RWA\) is continuous on its feasible set. Furthermore, because the state space \(X\) is bounded, \(V_\RWA\) is locally bounded. By construction, \(r\) is therefore locally bounded and consequently, \(r\) satisfies the reach conditions of Theorem~\ref{thm:RWA}.
\end{proof}

To guarantee the Stay condition after reaching \(\mathcal R\), we adopt a switching strategy. While \(x_k\notin \mathcal R\), the robust MPC~\eqref{eq:RMPC:orig} is applied. Once the state enters \(\mathcal R\), we switch to a local controller designed to robustly enforce forward invariance of \(\mathcal R\).

For the Stay part, a local control is obtained by enforcing the robust invariance condition as
\begin{subequations}\label{eq:stay1}
\begin{align}
\min_{u\in U}\quad &
\|A x + B u - x_{\mathrm{ref}}\|_{Q_s}^2 \\
\text{s.t.}\quad &
\sup_{w\in W} h(A x + B u + C w) \le 0,
\end{align}
\end{subequations}
for some \(Q_s\succ 0\), with optimal solution denoted by \(\kappa_s^\RWA(x)\). Note that under Assumption~\ref{assum:target_invariance}, optimization~\eqref{eq:stay1} is feasible for all $x\in\mathcal{R}$.

\begin{corollary}\label{cor:RWA_switch}
Under Assumption~\ref{Assum:terminal}, the feedback law
\begin{equation}
    \kappa^\RWA(x)
=
\begin{cases}
\kappa^\RWA_{\mathrm{MPC}}(x), & x\notin \mathcal R,\\
\kappa^\RWA_s(x), & x\in \mathcal R
\end{cases}, \label{eq:control}
\end{equation}
satisfies the Avoid, Reach, and Stay conditions of
Theorem~\ref{thm:RWA}. Consequently, the cRAS objective
\(\RWA\) is implemented almost surely. By Lemma~\ref{lemma:reduction}, if every generated cRAS
objective is implemented by a controller of the form
\eqref{eq:control}, then the resulting hybrid state-feedback policy
\(\policy_{\mathrm h}\) solves Problem~\ref{prob:MainProb}.
\end{corollary}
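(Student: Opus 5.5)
The plan is to check the three groups of conditions of Theorem~\ref{thm:RWA} one at a time for the switched feedback \eqref{eq:control}, using Theorem~\ref{thm:MPC} off the target set and Assumption~\ref{assum:target_invariance} together with \eqref{eq:stay1} on the target set. After that, Theorem~\ref{thm:RWA} gives almost-sure satisfaction of $\finally\globally\reachRWA\wedge\globally\neg\avoidRWA$ whenever the context $\contextRWA$ persists, so $\propC$ implements $\RWA$ in the sense of Definition~\ref{def:rwa}. Lemma~\ref{lemma:reduction} then gives the final claim. The certificates are $g$ and $h$ from Assumption~\ref{assum:cvx}, so \eqref{eq:AV1}, \eqref{eq:AV2}, \eqref{eq:ST1} and \eqref{eq:ST2} hold by definition. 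We take $r$ to be the function built in part~4 of Theorem~\ref{thm:MPC}: $V_\RWA$ off $\mathcal R$ and $-\rho$ on $\mathcal R$. It is locally bounded and satisfies \eqref{eq:RE2}.

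\textbf{Reach.} Condition~\eqref{eq:RE1} only has to hold for $x\notin\mathcal A\cup\mathcal R$. There $\kappa^\RWA=\kappa^\RWA_{\mathrm{MPC}}$, so the uniform constants $\varepsilon_\RWA,\delta_\RWA$ from part~4 of Theorem~\ref{thm:MPC} apply directly. One detail needs care: part~4 bounds the increment of $V_\RWA$, while $r$ equals $-\rho\le 0$ whenever the successor lands in $\mathcal R$. I would handle this by replacing $r$ on $\mathcal R$ with $\min(-\rho, \cdot)$, or by noting that entering $\mathcal R$ already completes the reach phase. In either case, landing in $\mathcal R$ can only lower the certificate value relative to $V_\RWA\ge 0$.

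\textbf{Avoid and Stay.} For $x\notin\mathcal A\cup\mathcal R$, \eqref{eq:AV3} is part~3 of Theorem~\ref{thm:MPC}. For $x\in\mathcal R$, the input $\kappa_s^\RWA(x)$ is feasible for \eqref{eq:stay1} by Assumption~\ref{assum:target_invariance}, so every successor lies in $\mathcal R$, which is exactly \eqref{eq:ST3}. For \eqref{eq:AV3} at such $x$ we additionally need $\mathcal R\cap\mathcal A=\emptyset$. This is implicit in any meaningful RAS task (the target is not forbidden), and with it $g\ge0$ on $\mathcal R$, so \eqref{eq:AV3} holds there too. I would state this disjointness explicitly as the natural standing hypothesis.

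\textbf{Main obstacle.} The real gap is feasibility. Theorem~\ref{thm:RWA} requires a control law at every state, but $\kappa^\RWA_{\mathrm{MPC}}$ exists only on the feasible set of \eqref{eq:RMPC:orig}, and part~2 of Theorem~\ref{thm:MPC} gives only \emph{nominal} recursive feasibility. I would therefore restrict the claim to initial states in the feasible domain and try to show that this domain is robustly invariant under the true disturbed dynamics. The tightened sets $\mathcal X_i^\RWA$ are robust only for avoidance, so the argument will likely need either the robust-tube interpretation of the constraints, or an explicit assumption that the MPC remains feasible along disturbed trajectories. If neither goes through, I would phrase the corollary relative to the feasible region and invoke Theorem~\ref{thm:RWA} on that region.
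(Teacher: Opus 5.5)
Your route is the paper's. Its proof of Corollary~\ref{cor:RWA_switch} cites Theorem~\ref{thm:MPC} for Avoid and Reach off $\mathcal R$, cites Assumption~\ref{assum:target_invariance} and the construction of $\kappa^\RWA_s$ for Stay on $\mathcal R$, and then applies Theorem~\ref{thm:RWA} and Lemma~\ref{lemma:reduction}. It addresses none of the three points you raise, and all three are legitimate.

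\textbf{The jump of $r$ into $\mathcal R$.} This is harmless, and your remark that landing in $\mathcal R$ only lowers the certificate settles it without modifying $r$. For $x\notin\mathcal A\cup\mathcal R$ one has $r(x)=V_\RWA(x)\ge c_\RWA$, while $r\le 0$ on $\mathcal R$. So for a successor $x^+\in\mathcal R$, either $r(x^+)-r(x)\le V_\RWA(x^+)-V_\RWA(x)$ (when $V_\RWA(x^+)$ is finite) or $r(x^+)-r(x)\le -c_\RWA$. Hence the event from part~4 is contained in the event of \eqref{eq:RE1}.

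\textbf{Avoidance on $\mathcal R$.} This is a real omission. The paper never verifies \eqref{eq:AV3} at $x\in\mathcal R$, and $\mathcal R\cap\mathcal A=\emptyset$ is exactly the missing hypothesis.

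\textbf{Feasibility under disturbances.} This is the substantive gap, and your first option cannot close it from the stated assumptions. Take $x^+=z_1^\star+Cw_0$ and reuse the tail inputs $u_1^\star,\dots,u_{N-1}^\star$. The candidate states are $\tilde z_i=z_{i+1}^\star+A^iCw_0$. Since $A^iCw_0+C_{i-1}\bar w=C_i\bar w'$ with $\bar w'=(w_0,\bar w_0,\dots,\bar w_{N-2})$, the avoidance part of $\tilde z_i\in\mathcal X_i^\RWA$ does follow from $z^\star_{i+1}\in\mathcal X^\RWA_{i+1}$. The terminal constraint, however, is not tightened. The state $\tilde z_{N-1}=z_N^\star+A^{N-1}Cw_0$ need not lie in $\mathcal X^\RWA_{\mathrm f}$, and Assumption~\ref{Assum:terminal} gives only nominal invariance, so a feasible completion need not exist.

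Here is a concrete failure satisfying all the paper's hypotheses. Take $A=\mathrm{diag}(1,\tfrac12)$, $B=e_1$, $C=e_2$, $U=[-1,1]$, and $W=[-0.1,0.1]$ with the uniform law. Let $\mathcal R$ be the unit disc, $x_{\mathrm{ref}}=0$, $\mathcal X^\RWA_{\mathrm f}=\{0\}$, $N=3$, and place $\mathcal A$ far away. The feasible set then lies in $\{x: e_2^\top x=0\}$. From $x_0=(2.5,0)^\top$, the next state is outside $\mathcal R$ and MPC-infeasible almost surely.

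Your third option is not enough by itself. The proof of Theorem~\ref{thm:RWA} (induction for avoidance, the layer argument for reach) needs the closed loop never to leave the set on which the conditions hold. What works is your second option, stated as an explicit hypothesis. Let $\mathcal F_\RWA$ be the feasible set of \eqref{eq:RMPC:orig}. Assume $Ax+B\kappa^\RWA_{\mathrm{MPC}}(x)+Cw\in\mathcal F_\RWA\cup\mathcal R$ for all $x\in\mathcal F_\RWA\setminus\mathcal R$ and all $w\in W$. Then restrict initial states to the robustly invariant set $\mathcal F_\RWA\cup\mathcal R$.

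Part~4 of Theorem~\ref{thm:MPC} uses this same hypothesis tacitly. Its continuity argument needs $V_\RWA$ to be finite near $Ax+B\kappa^\RWA_{\mathrm{MPC}}(x)$, which fails in the example above. With robust recursive feasibility and disjointness, your argument proves the corrected corollary. It still relies on part~4 as a black box. There, $\eta_\RWA$ is chosen for each fixed $x$, so the uniform $\varepsilon_\RWA$ you invoke is asserted rather than proved.
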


\begin{proof}
For \(x\notin\mathcal R\), the controller coincides with \(\kappa^\RWA_{\mathrm{MPC}}\). By Theorem~\ref{thm:MPC}, this controller enforces the Avoid conditions of Theorem~\ref{thm:RWA}. Moreover, the MPC value function \(V_\RWA\) provides the Reach certificate, so the Reach conditions of Theorem~\ref{thm:RWA} are satisfied as well.

For \(x\in\mathcal R\), the controller switches to \(\kappa_s^\RWA\). By the construction of the local stay controller and Assumption~\ref{assum:target_invariance}, the target set \(\mathcal R\) is robustly forward invariant, so the Stay conditions of Theorem~\ref{thm:RWA} hold. Therefore, all conditions of Theorem~\ref{thm:RWA} are satisfied, and the cRAS objective \(\RWA\) is implemented almost surely.

By Lemma~\ref{lemma:reduction}, if every generated cRAS objective is implemented by a controller of the form~\eqref{eq:control}, then the resulting hybrid state-feedback policy \(\policy_{\mathrm h}\) solves Problem~\ref{prob:MainProb}.
\end{proof}
The certificate framework provided in Theorem~\ref{thm:RWA} plays a central role in the proposed architecture. The high-level synthesis procedure described in Section~\ref{sec:reduction} produces a collection of context-triggered cRAS objectives, whereas the low-level controller operates directly on the continuous-state dynamics~\eqref{eq:dyn}. The certificates introduced in Theorem~\ref{thm:RWA} provide the interface between these two layers: they translate the logical requirements of a cRAS objective into verifiable properties of the continuous closed-loop system, namely avoidance, reachability, and invariance. Theorem~\ref{thm:MPC} shows that the proposed robust MPC formulation naturally generates the Avoid and Reach certificates, while the local invariant controller enforces the Stay condition. Consequently, Corollary~\ref{cor:RWA_switch} establishes that the resulting switching controller implements the cRAS objective. Combined with Lemma~\ref{lemma:reduction}, this implies correctness of the overall hybrid controller. 

While the robust MPC formulation~\eqref{eq:RMPC:orig} explicitly enforces avoidance for all disturbance realizations, it involves a minimization over disturbance sequences inside the constraints, which may be computationally expensive in its current form. To obtain computationally tractable synthesis procedures, the next section employs convex duality to reformulate the robust constraints into equivalent deterministic optimization problems.

\section{Tractable Reformulation via Convex Duality}\label{sec:tract}
While the certificate-based MPC construction developed in Section~\ref{sec:C_RMPC} provides a correct realization of the cRAS objectives, the resulting optimization problems are not directly tractable. In particular, the robust avoidance and invariance constraints involve nested optimization problems over disturbance realizations. Moreover, these disturbance optimizations are coupled across the prediction horizon through the lifted disturbance matrices \(C_i\), causing the dimension of the inner optimization to grow with the horizon length \(N\). As a result, directly enforcing the robust constraints can become computationally expensive even for moderate horizons.

To obtain computationally tractable synthesis procedures, we exploit convex duality to eliminate the inner disturbance optimization problems and replace them with equivalent deterministic constraints. This yields finite-dimensional optimization problems whose structure depends on the geometry of the avoidance and disturbance sets.

\subsection{Robust MPC Reformulation}

Under Assumption~\ref{assum:cvx}, we show that the inner minimization in~\eqref{eq:RMPC:orig} admits an exact dual representation. This allows us to eliminate the disturbance variables entirely and replace the robust constraints by deterministic constraints involving dual variables.

\begin{lemma}\label{lem:dual_identity}
Let \(g:\mathbb R^{n_x} \to \mathbb R \cup \{+\infty\}\) be proper, convex, and closed, and let \(\mathbb W \subset \mathbb R^{q}\) be convex and compact. Then for any \(z\) and matrix \(M\in\mathbb{R}^{n_x\times q}\), the identity
\begin{equation}\label{eq:PD}
\min_{w \in \mathbb W} g(z + M w)
=
\sup_{y \in \mathbb R^{n_x}}
\left(
y^\top z
-
g^*(y)
-
\sigma_{\mathbb W}(-M^\top y)
\right)
\end{equation}
holds, where \(g^*(y)\) is the convex conjugate of \(g\) defined as
\[
g^*(y)
:=
\sup_{t \in \mathbb R^{n_x}}
\left(
y^\top t - g(t)
\right),
\]
and \(\sigma_{\mathbb W}(v):=\max_{w \in \mathbb W} v^\top w\) is the support function of \(\mathbb W\).
\end{lemma}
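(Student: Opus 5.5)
The plan is to combine the Fenchel--Moreau biconjugation theorem with a minimax interchange over the compact set \(\mathbb W\).

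\emph{Step 1: biconjugate representation.} Since \(g\) is proper, convex and closed, Fenchel--Moreau gives \(g=g^{**}\). Hence, for every \(w\in\mathbb W\),
\[
g(z+Mw)=\sup_{y\in\operatorname{dom} g^*}\bigl(y^\top(z+Mw)-g^*(y)\bigr).
\]
Here \(g^*\) is itself proper, closed and convex, so \(\operatorname{dom} g^*\) is a nonempty convex set. Restricting to \(\operatorname{dom} g^*\) changes nothing, because \(y\notin\operatorname{dom}g^*\) contributes \(-\infty\). Define
\[
F(w,y):=y^\top z+y^\top Mw-g^*(y)\quad\text{on } \mathbb W\times\operatorname{dom}g^*,
\]
which is real valued. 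The left-hand side of \eqref{eq:PD} is then \(\inf_{w\in\mathbb W}\sup_{y}F(w,y)\). The infimum is attained, possibly with value \(+\infty\), because \(w\mapsto g(z+Mw)\) is lower semicontinuous as a supremum of affine functions and \(\mathbb W\) is compact. This justifies writing \(\min\).

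\emph{Step 2: exchange of \(\min\) and \(\sup\).} I will invoke Sion's minimax theorem. Its hypotheses hold here: \(\mathbb W\) is compact and convex, and \(\operatorname{dom}g^*\) is convex. For each fixed \(y\), \(F(\cdot,y)\) is affine, hence continuous and convex in \(w\). For each fixed \(w\), \(F(w,\cdot)\) is concave and upper semicontinuous in \(y\), because \(-g^*\) is concave and upper semicontinuous (\(g^*\) being closed). Sion's theorem then yields
\[
\min_{w\in\mathbb W}\sup_y F(w,y)=\sup_y\min_{w\in\mathbb W}F(w,y).
\]
I expect this step to be the main obstacle. Sion's theorem needs a real-valued function, which is why I restrict to \(\operatorname{dom}g^*\). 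I must also check that only the \(w\)-variable needs compactness, and that the statement covers the case where the common value is \(+\infty\). Sion's theorem as usually stated (inf--sup over a compact set) does allow this, but I will verify it.

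\emph{Step 3: evaluation of the inner minimum.} For fixed \(y\),
\[
\min_{w\in\mathbb W} y^\top Mw=-\max_{w\in\mathbb W}(-M^\top y)^\top w=-\sigma_{\mathbb W}(-M^\top y).
\]
This maximum is finite and attained by compactness. Substituting it into the right-hand side of Step 2 gives
\[
\sup_{y}\bigl(y^\top z-g^*(y)-\sigma_{\mathbb W}(-M^\top y)\bigr).
\]
Finally, extending the supremum back to all of \(\mathbb R^{n_x}\) changes nothing, since \(g^*=+\infty\) off its domain. This establishes \eqref{eq:PD}.
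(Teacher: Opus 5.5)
Your proposal is correct and follows the same route as the paper. Both arguments use Fenchel--Moreau to write $g(z+Mw)$ as a supremum of affine functions of $w$, then Sion's minimax theorem (with only $\mathbb W$ compact) to swap $\min$ and $\sup$, and finally evaluate the inner minimum as $-\sigma_{\mathbb W}(-M^\top y)$. Your additional checks supply hypotheses of Sion's theorem that the paper's proof leaves implicit: restricting to $\operatorname{dom} g^*$ so the saddle function is real-valued, upper semicontinuity in $y$ from closedness of $g^*$, attainment of the outer minimum, and the case where the value is $+\infty$.
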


\begin{proof}
By the Fenchel--Moreau theorem, \(g=g^{**}\) for a proper, convex, and closed function \(g\) (see, e.g., Theorem~4.2.1 in~\cite{borwein2006convex}), i.e.,
\[
g(z + Mw)
=
\sup_{y \in \mathbb{R}^{n_x}}
\left\{
f_z(w,y):= y^\top (z + Mw) - g^*(y)
\right\}.
\]
Substituting this into the left-hand side of~\eqref{eq:PD} yields
\[
\min_{w \in \mathbb W} g(z + M w)
=
\min_{w \in \mathbb W}\sup_{y \in \mathbb{R}^{n_x}} f_z(w,y).
\]
Note that \(f_z(w,y)\) is convex (affine) in \(w\). Moreover, it is concave in \(y\), because it is the sum of the concave function \(-g^*\) and a linear term in \(y\).

Sion's minimax theorem~\cite{sion1958general} states that for a function \(f(w,y)\) that is convex--concave, \(\min_w \sup_y f_z(w,y)=\sup_y \min_w f_z(w,y)\) if one of the sets (in this case \(\mathbb W\)) is compact. Since \( \mathbb W\) is compact and \(g^*\) is convex, the conditions hold and swapping the infimum and supremum gives
\[
\sup_{y \in \mathbb{R}^{n_x}}
\left\{
y^\top z - g^*(y) + \min_{w \in \mathbb W}(y^\top M w)
\right\}.
\]
Finally, the last term can be written as
\[
\min_{w \in \mathbb W}(M^\top y)^\top w
=
-\max_{w \in \mathbb W}(-M^\top y)^\top w
=
-\sigma_{\mathbb W}(-M^\top y),
\]
which proves~\eqref{eq:PD}.
\end{proof}
Geometrically, the dual variable \(y\) can be viewed as a certificate direction. For a given \(y\), the quantity \(y^\top z-g^*(y)\) measures how far the predicted state is from violating the avoidance constraint, while \(\sigma_W(-M^\top y)\) captures the worst-case effect of disturbances in that direction. The dual problem therefore searches for a direction \(y\) proving that the avoidance constraint remains satisfied despite all admissible disturbances. This replaces the explicit optimization over disturbances by an equivalent certificate-based condition. 

Applying Lemma~\ref{lem:dual_identity} with \(z=z_{i+1}\), \(M=C_i\), \(\mathbb W=\bar W\), and $q=Nn_w$, gives
\[
\min_{\bar w\in \bar W} g(z_{i+1}+C_i \bar w)
=
\sup_{y_i}
\left(
y_i^\top z_{i+1}
-
g^*(y_i)
-
\sigma_{\bar W}(-C_i^\top y_i)
\right).
\]
Therefore, the robust constraint in~\eqref{eq:RMPC:orig} is equivalent to
\[
\exists y_i \in \mathbb R^{n_x}:\quad
y_i^\top z_{i+1}
-
g^*(y_i)
-
\sigma_{\bar W}(-C_i^\top y_i)
\ge 0.
\]
From the definition of \(\bar W\), its support function decomposes additively:
\[
\sigma_{\bar W}(v)
=
\sum_{j=0}^{N-1}
\sigma_W(v_j),
\]
where \(v_j\) are the \(n_w\)-dimensional blocks of \(v\). Moreover, because \(C_i\) has only the first \(i+1\) nonzero blocks,
\[
\sigma_{\bar W}(-C_i^\top y_i)
=
\sum_{j=0}^{i}
\sigma_W\!\left(
(-C_i^\top y_i)_j
\right).
\]
Hence, the robust MPC in~\eqref{eq:RMPC:orig} is equivalent to the deterministic problem
\begin{subequations}\label{eq:RMPC2}
\begin{align}
\min_{z,u,y}\quad
&
T_\RWA(z_{N})
+
\sum_{i=0}^{N-1}
\ell_\RWA(z_{i},u_{i})
\\
\text{s.t.}\quad
&
z_{i+1}=A z_i + B u_i, \qquad i=0,\ldots,N-1,\nonumber\\
&
u_i \in  U, \qquad i=0,\ldots,N-1,\nonumber\\
&
z_0 = x,\nonumber\\
&
z_N \in \mathcal X^\RWA_{\mathrm f},\nonumber\\
& 
y_i^\top\! z_{i+1}
\!\!-\!
g^*\!(y_i)
\!-\!
\!\sum_{j=0}^{i}
\!\!\sigma_W\!\left(
(-C_i^\top y_i)_j
\right)
\!\ge\! 0,\label{eq:robustCons2}
\\ &\qquad\qquad i=0,\ldots,N-1.\nonumber
\end{align}
\end{subequations}
Consequently, the robust MPC formulation~\eqref{eq:RMPC:orig} is equivalent to the deterministic optimization problem~\eqref{eq:RMPC2}, in which the disturbance minimization is replaced by dual variables \(y_i\). The term \(g^*\) encodes the avoidance geometry, while the support functions encode the disturbance-set geometry.

While the dual reformulation derived above is exact under the convexity assumptions, its computational structure depends on the specific geometry of the avoidance function \(g\) and the disturbance set \(W\). Different choices lead to different classes of convex programs. We therefore consider two practically relevant cases in detail:
(i) ellipsoidal avoidance sets with polyhedral disturbances, and
(ii) polyhedral avoidance sets with ellipsoidal disturbances.

\subsection*{Case 1: Ellipsoidal Avoid Set and Polyhedral Disturbance}
We first consider the case where the avoidance set is ellipsoidal and the disturbance set is polyhedral.

Consider the avoidance set \(\mathcal A\) with function 
\[
g(x)= (x-x_a)^\top Q (x-x_a) - 1, \quad Q\succ 0,
\]
for some $x_a\in \mathcal{A}$. The convex conjugate of \(g\) is given by (see e.g.,~\cite{boyd2004convex})
\[
g^*(y) = x_a^\top y + \frac{1}{4} y^\top Q^{-1} y + 1.
\]
Moreover, consider the disturbance set is polyhedral, given by:
\begin{equation}\label{eq:polyW}
W = \{ w \in \mathbb R^{n_w} : H_W w \le h_W \},
\end{equation}
for some \(H_W\in \mathbb{R}^{n_p\times n_w}\) and \(h_W\in \mathbb{R}^{n_p}\). Its support function is the linear program
\[
\sigma_W(v)
=
\max_{w \in W} v^\top w,
\]
and by linear programming duality its dual problem is
\begin{equation}\label{eq:dual:case1}
\sigma_W(v)
=
\inf_{\mu \ge 0}
\left\{
h_W^\top \mu
\;\middle|\;
H_W^\top \mu = v
\right\}.
\end{equation}
Substituting \(g^*\) into~\eqref{eq:robustCons2}, we have
\[
y_i^\top (z_{i+1}-x_a)
-
\frac{1}{4} y_i^\top Q^{-1} y_i
-
\sum_{j=0}^{i}
\sigma_W\!\left(
(-C_i^\top y_i)_j
\right)
\ge 1.
\]
Using~\eqref{eq:dual:case1}, this is equivalent to the existence of multipliers \(\mu_{ij} \ge 0\) such that
\begin{subequations}\label{eq:case1cons}
\begin{align}
&H_W^\top \mu_{ij} = -(C_i^\top y_i)_j, \\
&y_i^\top (z_{i+1}-x_a)
-
\frac{1}{4} y_i^\top Q^{-1} y_i
-
\sum_{j=0}^{i}
h_W^\top \mu_{ij}
\ge 1.
\end{align}
\end{subequations}
Hence, replacing the robust constraint~\eqref{eq:robustCons2} by~\eqref{eq:case1cons} with the additional decision variables $\mu_{ij}\ge0$ in~\eqref{eq:RMPC2}, the robust constraint is equivalently expressed through quadratic and linear inequalities. 

\subsection*{Case 2: Polyhedral Avoid Set and Ellipsoidal Disturbance}
We now consider the complementary case in which the avoidance set is polyhedral and the disturbance set is ellipsoidal. Consider a polyhedral avoidance set with
\[
g(x) = \max_{k=1,\dots,p} (f_k^\top x - b_k).
\]
Its convex conjugate is given by (see e.g.,~\cite{boyd2004convex})
\[
g^*(y)
=
\begin{cases}
\lambda^\top b,
&
y = F^\top \lambda,
\quad
\lambda \ge 0,
\quad
\mathbf 1^\top \lambda = 1,
\\
+\infty,
&
\text{otherwise},
\end{cases}
\]
where $F :=
[
f_1^\top,\ldots,
f_p^\top
]^\top
\in \mathbb R^{p\times n_x}$ and $b := (b_1,\dots,b_p)^\top$.
Assume now that the disturbance set is ellipsoidal:
\begin{equation}\label{eq:ellip:dist}
W
=
\{ w \in \mathbb R^{n_w} : w^\top P^{-1} w \le 1 \},
\qquad P \succ 0.
\end{equation}
Then the support function of \(W\) is obtained as
\[
\sigma_W(v)
=
\max_{w}
v^\top w
\quad
\text{s.t.}
\quad
w^\top P^{-1} w \le 1.
\]
This is a convex quadratic program. Using the Cauchy--Schwarz inequality,
\[
v^\top w
=
(P^{1/2} v)^\top (P^{-1/2} w)
\le
\|P^{1/2} v\|_2
\|P^{-1/2} w\|_2.
\]
Since \(w^\top P^{-1} w \le 1\) implies \(\|P^{-1/2} w\|_2 \le 1\), the maximum is attained when \(w\) is aligned with \(P v\). Therefore,
\[
\sigma_W(v)
=
\|P^{1/2} v\|_2
=
\sqrt{v^\top P v}.
\]
Substituting into~\eqref{eq:robustCons2}, the robust constraint becomes
\[
y_i^\top z_{i+1}
-
g^*(y_i)
-
\sum_{j=0}^{i}
\sqrt{
(C_i^\top y_i)_j^\top
P
(C_i^\top y_i)_j
}
\ge 0.
\]
Since \(g^*(y_i)\) is finite only when
\[
y_i = F^\top \lambda_i,
\qquad
\lambda_i \ge 0,
\qquad
\mathbf 1^\top \lambda_i = 1,
\]
the constraint becomes
\[
\lambda_i^\top (F z_{i+1} - b)
-
\sum_{j=0}^{i}
\sqrt{
(C_i^\top F^\top \lambda_i)_j^\top
P
(C_i^\top F^\top \lambda_i)_j
}
\ge 0.
\]
Introducing auxiliary variables \(t_{ij}\) satisfying
\begin{equation}\label{eq:cons1:case2}
\| P^{1/2} (C_i^\top F^\top \lambda_i)_j \|_2 \le t_{ij},
\end{equation}
the constraint can be written as
\begin{equation}\label{eq:cons2:case2}
\lambda_i^\top (F z_{i+1} - b)
-
\sum_{j=0}^{i} t_{ij}
\ge 0,
\qquad
\lambda_i \ge 0,
\qquad
\mathbf 1^\top \lambda_i = 1,
\end{equation}
which is second-order cone representable.

The size of the resulting optimziation grows polynomially with the prediction horizon. In particular, for each stage \(i\), one introduces a dual variable \(\lambda_i\) and \(i+1\) auxiliary variables \(t_{ij}\), together with the corresponding second-order cone constraints~\eqref{eq:cons1:case2}. Consequently, the total number of auxiliary variables and conic constraints scales as \(\mathcal O(N^2)\) with the horizon length \(N\).

Note that additional geometric configurations can be handled analogously through the same duality framework. For brevity, we restrict our attention to the two representative cases above, which already capture the most common safety and disturbance models encountered in practice.

\subsection{Switching Strategy Reformulation}
We now derive a tractable dual characterization of the Stay condition in~\eqref{eq:stay1} under a quadratic target function and a polyhedral disturbance set. Suppose the target function is quadratic and defined as
\[
h(x) := \|x - x_{\mathrm{ref}}\|_{Q_h}^2 - \rho_h^2,
\]
for some \(Q_h \succ 0\) and \(\rho_h > 0\). Assume the disturbance set is polyhedral as in~\eqref{eq:polyW}. Using convex duality for the maximization over disturbances, the robust constraint in~\eqref{eq:stay1} can be equivalently expressed through the existence of dual variables \(\mu \ge 0\) such that
\begin{equation}\label{eq:stay2}
h_W^\top \mu
-
\frac{1}{4} \mu^\top H_W H_W^\top \mu
+
\xi^\top H_W^\top \mu
\le \rho_h^2,
\end{equation}
where $\xi := A x + B u - x_{\mathrm{ref}}.$  Similar to the MPC robust constraints, the inner maximization over disturbances in~\eqref{eq:stay1} can therefore be reformulated using convex duality, yielding a tractable deterministic condition. The derivation follows the same convex-duality arguments used in Section~5.1 and is therefore omitted for brevity. In particular, applying Lemma~\ref{lem:dual_identity} to the quadratic stay certificate \(h\) and the disturbance set \(W\) yields an equivalent deterministic reformulation of the robust invariance constraint. Therefore, both the robust MPC problem~\eqref{eq:RMPC:orig} and the local invariance problem~\eqref{eq:stay1} admit equivalent deterministic reformulations obtained through convex duality.

\section{Numerical Simulations}\label{sec:Sim}

In this section, we demonstrate the proposed framework on a robot navigation problem subject to temporal logic constraints. We consider both static and dynamic environments and compare the resulting feasibility regions with the Lyapunov-based framework of~\cite{nayak2023context}. We consider the discrete-time stochastic linear system in~\eqref{eq:dyn} with $A = I_2$, $B = I_2$ and $C = I_2$,
and let the disturbance set be the bounded box
\begin{equation*}
 W = \{ w \in \mathbb{R}^2 \mid \|w\|_\infty \leq 0.03 \},
\end{equation*}
and the admissible control inputs satisfy
\[
U=\{u\in\mathbb R^2\mid\|u\|_\infty\leq 0.15\}.
\]
The workspace is a rectangular domain $[0,3] \times [0,3]$ partitioned into two rooms connected by a doorway. The MPC prediction horizon is $N=6$ and its stage and terminal costs follow~\eqref{eq:stage_terminal} with $Q_\ell=I_2$, $R_\ell=0.1I_2$, and $Q_T=10I_2$. The local controller follows~\eqref{eq:stay1} with $Q_s=I_2$. All simulations were implemented in MATLAB using CasADi and solved online with IPOPT~\cite{andersson2018casadi}. The environment contains three circular target regions $\mathcal T_1$, $\mathcal T_2$, and $\mathcal T_3$, each of radius $r=0.3$, defined as
\begin{equation*}
\mathcal T_i = \{ x \in \mathbb{R}^2 \mid \|x - t_i\|_2 \leq r \}, \quad i \in \{1,2,3\}.
\end{equation*}

The target centers are:
\begin{equation*}
t_1 = \begin{bmatrix}0.5 \\ 1 \end{bmatrix}, \quad
t_2 = \begin{bmatrix}1 \\ 1.75 \end{bmatrix}, \quad
t_3 = \begin{bmatrix}2.25 \\ 1.5 \end{bmatrix}.
\end{equation*}

A wall with a width of $0.3$ separates the two rooms, with a door in the middle of width $1$ that can either be open or closed. 
Starting in \(\mathcal T_1\), the robot must repeatedly navigate between the target regions while reacting to logical context switches and changes in the door status. In particular, reaching \(\mathcal T_2\) closes the door, while reaching \(\mathcal T_1\) or \(\mathcal T_3\) reopens it. The high-level temporal specification triggers context-dependent RAS tasks that, in this scenario, unfold as a switching sequence of $8$ phases:
\begin{enumerate}
    \item Starting at $\mathcal T_1$ with the door open, reach $\mathcal T_3$ while avoiding $\mathcal T_2$.

    \item After reaching $\mathcal T_3$ and dwelling for $3$ time steps, reach $\mathcal T_2$ while avoiding $\mathcal T_1$.

    \item Upon reaching $\mathcal T_2$, the door closes, and the task is to remain in $\mathcal T_2$.

    \item After $3$ time steps, reach $\mathcal T_1$ while avoiding $\mathcal T_3$ and the door (which is closed).

    \item Before reaching $\mathcal T_1$, a preemptive switch occurs: reach $\mathcal T_2$ while avoiding $\mathcal T_1$, $\mathcal T_3$, and the door.

    \item After reaching $\mathcal T_2$ and dwelling for $3$ time steps, reach $\mathcal T_1$ while avoiding $\mathcal T_3$.

    \item Upon reaching $\mathcal T_1$, the door opens, and the task is to reach $\mathcal T_3$ while avoiding $\mathcal T_2$.

    \item After reaching $\mathcal T_3$ and dwelling for $3$ time steps, reach $\mathcal T_1$ while avoiding $\mathcal T_2$, and terminate upon reaching $\mathcal T_1$.
\end{enumerate}

We first consider the case where all targets remain fixed over time. Figure~\ref{fig:1} illustrates the robot trajectory under the proposed controller together with the environment geometry, where the purple regions depict robust reachable tubes generated by the MPC predictions, ensuring safety under all disturbance realizations. The corresponding control inputs are shown in Figure~\ref{fig:2}, remaining within the prescribed bounds.

\begin{figure*}[t]
    \centering
    \includegraphics[width=\textwidth]{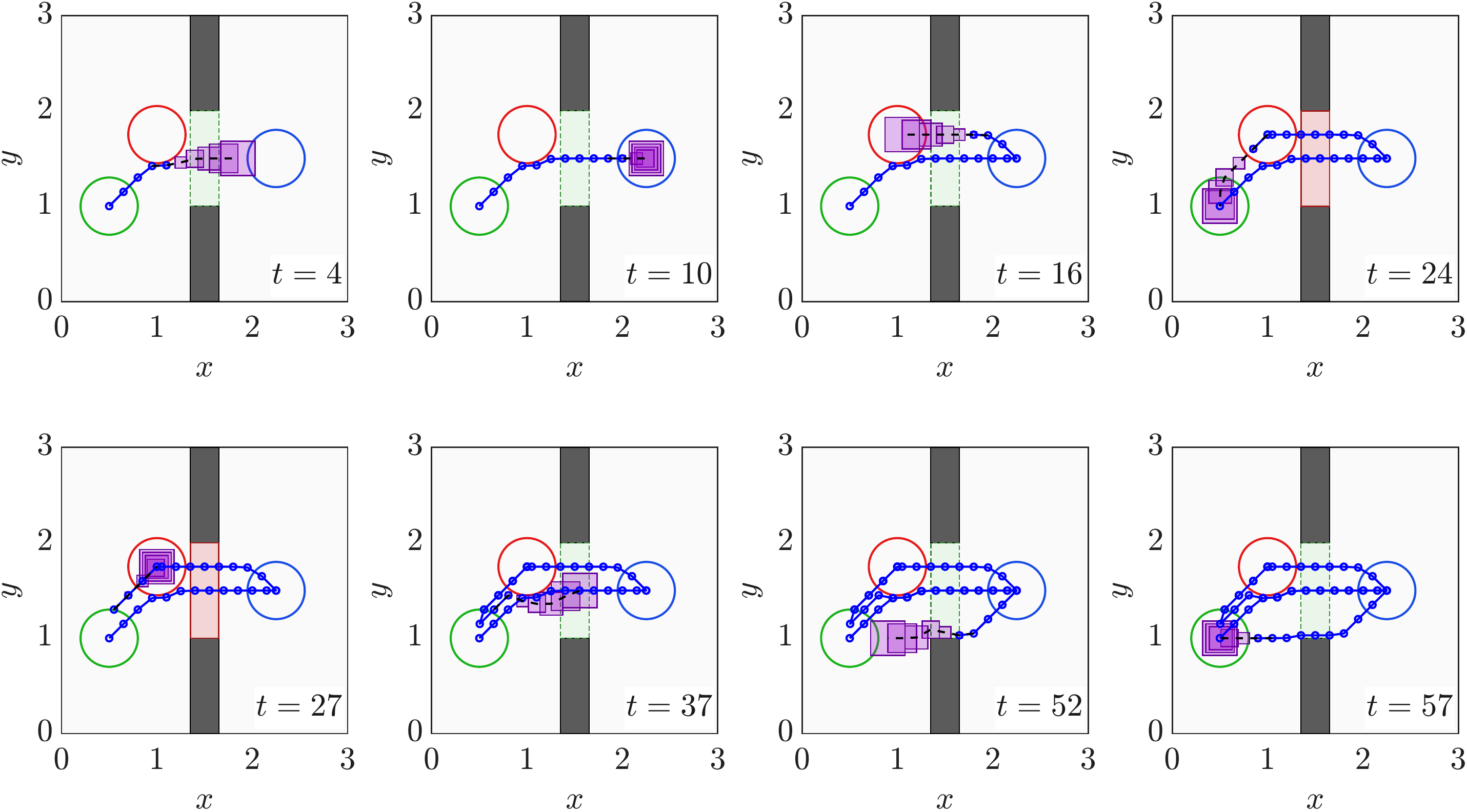}
    \caption{Closed-loop trajectory and robust tubes in the static environment. The robot successfully satisfies all RAS tasks under bounded disturbances.}
    \label{fig:1}
\end{figure*}

\begin{figure}[]
    \centering
    \includegraphics[width=0.48\textwidth]{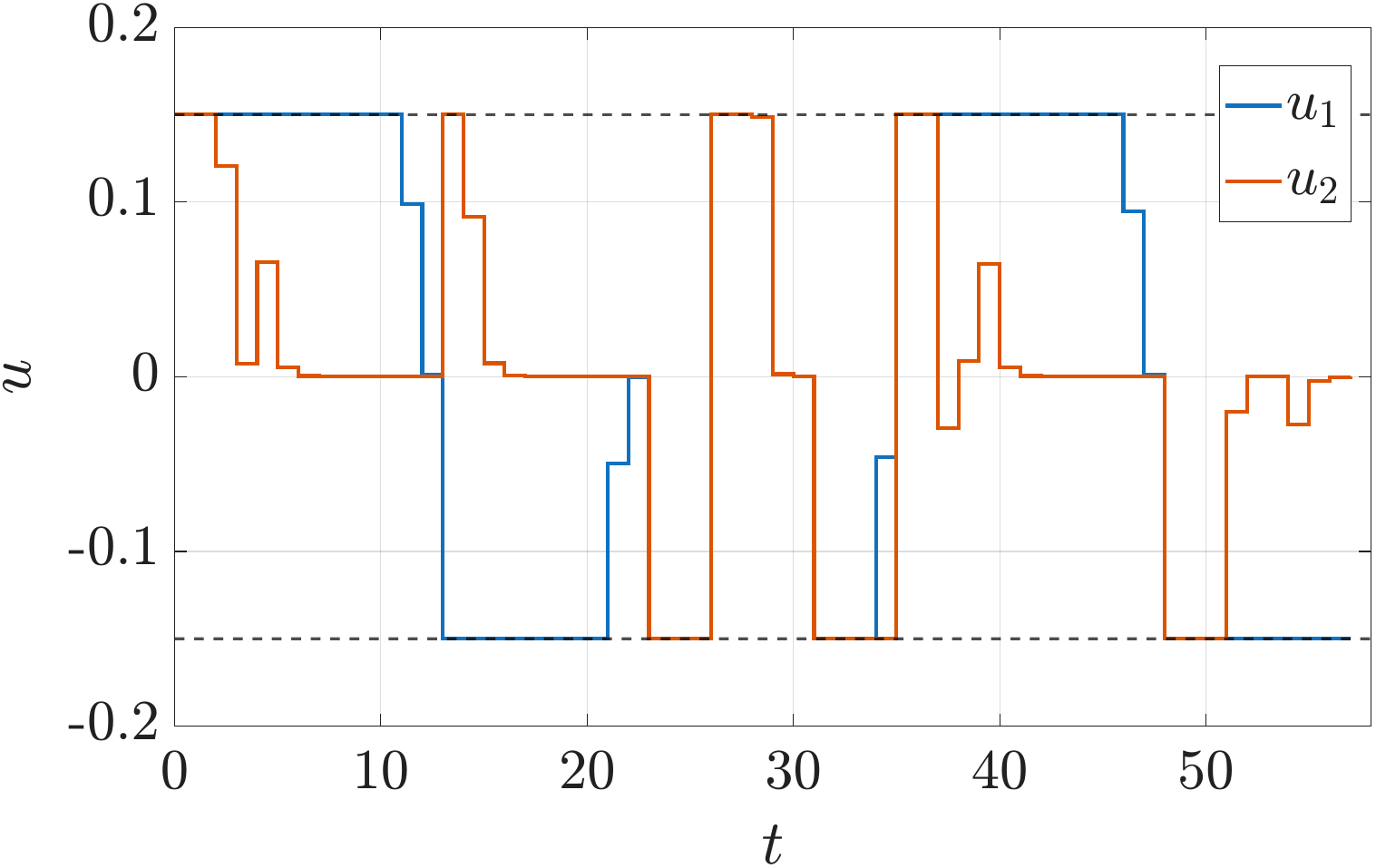}
    \caption{Control inputs for the static scenario. The inputs respect the constraints.}
    \label{fig:2}
\end{figure}

We next consider a more challenging scenario in which the targets \(\mathcal T_2\) and \(\mathcal T_3\) move over time, resulting in a dynamic environment. Figure~\ref{fig:3} shows the resulting trajectories. Despite the moving targets and stochastic disturbances, the robot continuously adapts its motion and satisfies all RAS specifications. The corresponding control inputs are shown in Figure~\ref{fig:4}, showing that both inputs respect the control constraints. 

\begin{figure*}[t]
    \centering
    \includegraphics[width=\textwidth]{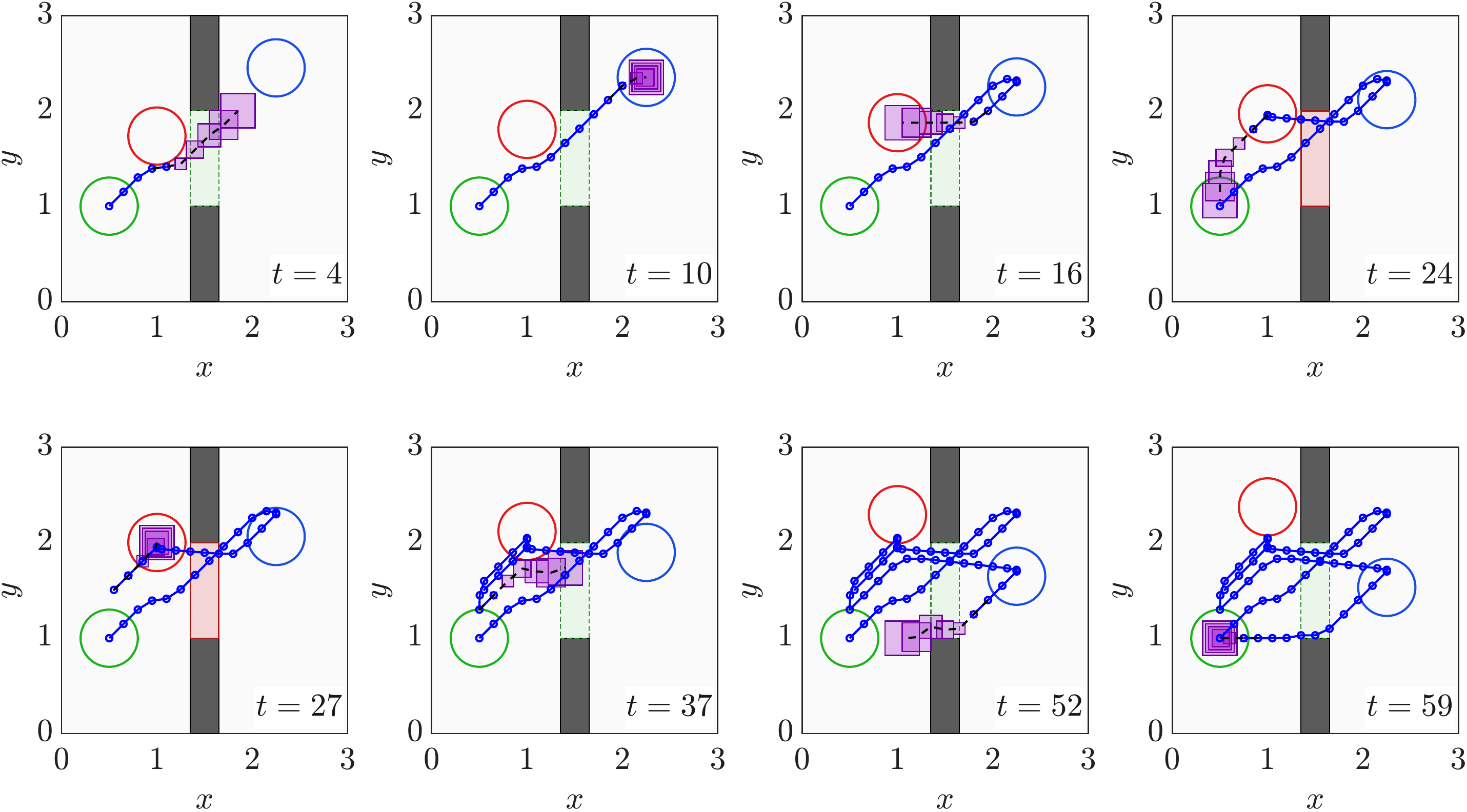}
    \caption{Closed-loop trajectory in the moving target scenario. The controller successfully tracks moving targets while maintaining safety and respecting the logical constraints.}
    \label{fig:3}
\end{figure*}

\begin{figure}[]
    \centering
    \includegraphics[width=0.48\textwidth]{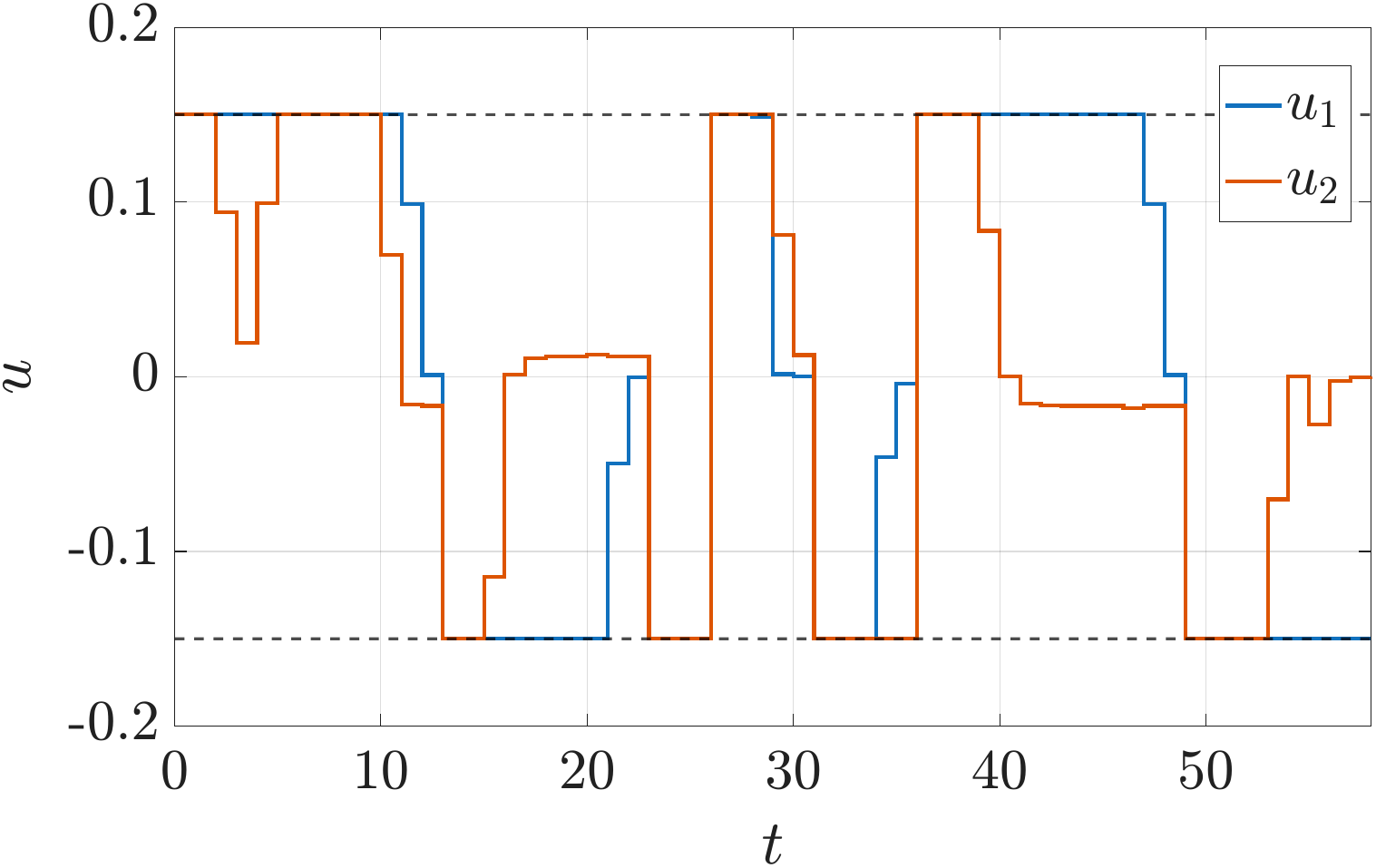}
    \caption{Control inputs in the moving target scenario. The MPC adapts to time-varying targets without violating constraints.}
    \label{fig:4}
\end{figure}

Figure~\ref{fig:5} compares the feasibility region of the proposed method with the Lyapunov-based approach in~\cite{nayak2023context}, which relies on control Lyapunov functions. To quantify the performance of the proposed approach, we compare the size of the feasible set of both approaches. This figure shows that the proposed robust MPC approach substantially enlarges the feasible set. In particular, in the closed-door scenario, the feasible region increases by a factor of $4.38$ (area ratio), while in the open-door scenario, the improvement is by a factor of $3.57$. Since the system considered in \cite{nayak2023context} is deterministic and fully actuated, the feasible set of the MPC formulation essentially coincides with the entire state space, excluding only the obstacle sets. In contrast, the approach of \cite{nayak2023context} relies on Lyapunov-function sublevel sets to characterize the region of attraction. Such sublevel sets, particularly when based on quadratic Lyapunov functions, are ellipsoidal and therefore provide only an inner approximation of the admissible state space, which can be significantly conservative. As a result, the MPC framework is able to exploit a substantially larger portion of the admissible state space, leading to the larger feasible sets shown in Figure~\ref{fig:5}.

\begin{figure}[]
    \centering
    \includegraphics[width=0.48\textwidth]{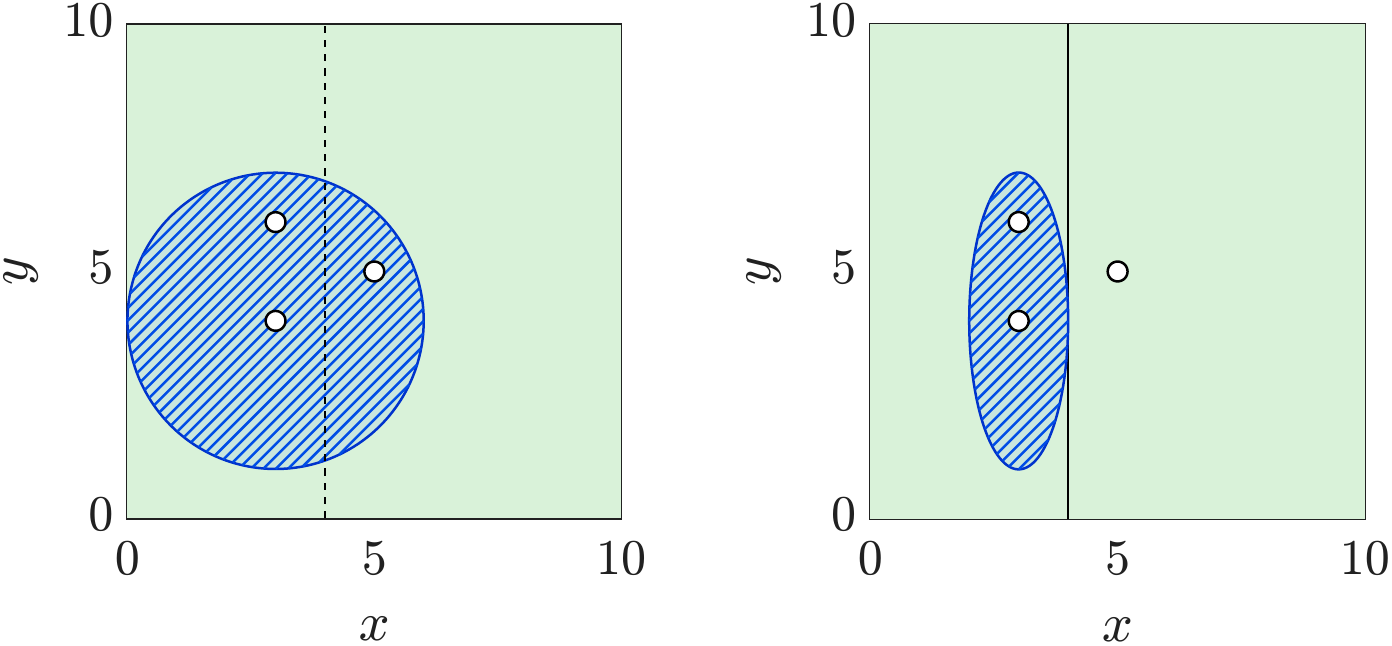}
    \caption{Comparison of the feasible sets obtained with the proposed robust MPC approach (green region) and the control Lyapunov function-based approach of \cite{nayak2023context} (blue hatched region) for the mobile robot navigation setting considered in \cite{nayak2023context}. Left: open-door scenario. Right: closed-door scenario.}
    \label{fig:5}
\end{figure}

Furthermore, while Lyapunov-based approaches are typically designed for static environments, the proposed framework naturally extends to time-varying settings, as demonstrated by the moving target scenario. This highlights not only the ability of the method to handle richer classes of specifications, but also its flexibility in adapting to dynamically changing environments.

The simulations also illustrate several important properties of the proposed approach. In particular, the controller maintains safety under all admissible disturbance realizations and achieves almost-sure reachability of the target sets. The switching behavior correctly reflects the logical phase transitions induced by the high-level specification, and the controller adapts seamlessly between static and dynamic scenarios. Together, these observations demonstrate that the proposed combination of certificate-based reasoning and robust MPC provides an effective and flexible framework for enforcing temporal logic specifications under uncertainty.

\section{Conclusions}\label{sec:Conc}
In this paper, we addressed the problem of synthesizing feedback controllers for discrete-time stochastic linear systems to ensure the robust satisfaction of context-triggered linear temporal logic specifications under additive disturbances. Focusing on context-dependent reach-avoid-stay (cRAS) objectives arising in context-triggered hybrid control frameworks, we introduced certificate-based sufficient conditions for robust safety, probabilistic reachability, and robust invariance. Based on these conditions, we proposed a control architecture combining robust finite-horizon optimal control with a local switching strategy, and showed that the resulting value function provides a reachability certificate. We further employed convex duality to reformulate the robust constraints into equivalent deterministic optimization problems, leading to tractable formulations for relevant geometric settings. The proposed method was demonstrated on a robot navigation example with logical context switches in both static and moving environments, where it achieved the desired RAS behavior under uncertainty, enlarged the feasible set compared with Lyapunov-based methods, and naturally extended to dynamic scenarios. These results provide a step toward integrating high-level temporal logic synthesis with low-level robust control for stochastic systems.

\bibliographystyle{model1-num-names}
\bibliography{CT_RMPC}

\end{document}